\newcounter{RomanNumber}
\newcommand{\vig}{VIG-Tree\xspace}
\begin{document}

\title{Efficient Continuous Top-$k$ Geo-Image Search on Road Network
}
\subtitle{}


\author{Chengyuan Zhang  \and
        Kesheng Cheng \and
        Lei Zhu \and
        Ruipeng Chen \and
        Zuping Zhang \and
        Fang Huang \and
}


\institute{Chengyuan Zhang \at
            \email{cyzhang@csu.edu.cn}
            \and
            Kesheng Cheng \at
              \email{chengks0901@csu.edu.cn}
           \and
           Lei Zhu \at
              \email{leizhu@csu.edu.cn}
           \and
           Ruipeng Chen \at
              \email{rpchen@csu.edu.cn}
           \and
           \Letter Zuping Zhang \at
              \email{zpzhang@csu.edu.cn}
           \and
           Fang Huang \at
              \email{hfang@csu.edu.cn}
           \\
           \\
           School of Information Science and Engineering, Central South University, PR China
}

\date{Received: date / Accepted: date}

\maketitle

\begin{abstract}
With the rapid development of mobile Internet and cloud computing technology, large-scale multimedia data, e.g., texts, images, audio and videos have been generated, collected, stored and shared. In this paper, we propose a novel query problem named continuous top-$k$ geo-image query on road network which aims to search out a set of geo-visual objects based on road network distance proximity and visual content similarity. Existing approaches for spatial textual query and geo-image query cannot address this problem effectively because they do not consider both of visual content similarity and road network distance proximity on road network. In order to address this challenge effectively and efficiently, firstly we propose the definition of geo-visual objects and continuous top-$k$ geo-visual objects query on road network, then develop a score function for search. To improve the query efficiency in a large-scale road network, we propose the search algorithm named geo-visual search on road network based on a novel hybrid indexing framework called VIG-Tree, which combines G-Tree and visual inverted index technique. In addition, an important notion named safe interval and results updating rule are proposed, and based on them we develop an efficient algorithm named moving monitor algorithm to solve continuous query. Experimental evaluation on real multimedia dataset and road network dataset illustrates that our solution outperforms state-of-the-art method.
\keywords{multimedia retrieval \and geo-visual objects \and continuous top-$k$ query \and road network}
\end{abstract}

\section{Introduction}
\label{Intro}
With the rapid development of mobile Internet and cloud computing technology, large scale multimedia data~\cite{DBLP:conf/cikm/WangLZ13,DBLP:conf/mm/WangLWZZ14,DBLP:conf/mm/WangLWZ15}, e.g., texts, images~\cite{DBLP:journals/tip/WangLWZ17}, audio and videos have been generated, collected, stored and shared. For example. Facebook, the most famous online social networks services, reports more then 300 million photos uploaded and shared daily in the November 2013. More than 3.5 million photos had been uploaded by 87 million registered users of Flickr, which is the largest online photo sharing service. More than 140 million Twitter users posts 400 million tweets which contain 140 characters in text and images with geographical information like latitude and longitude. YouTube, the largest video sharing web site in the world, shares more than 100 hours of videos every minutes in the end of 2013. The total amount of users in Himalaya which is a popular audio sharing platform are already more than 470 million. As of December 2015, the total amount of audio has exceeded 15 million in Himalaya. Beyond all doubt, unlike the situation in the past, the mess multimedia data account for nearly 80\% total amount of data in the big data environment. As it knows to all, advanced mobile devices equipped with wireless network module, high definition camera and microphone such as smartphones and tablets, together with other popular mobile applications and location based services (LBS for short) like WeChat, Uber, Amap and etc., bring a lot of convenience to people everyday by collecting and sharing massive multimedia data with geo-location information. In the meanwhile, more challenges are raised concerning multimedia data retrieval~\cite{DBLP:journals/tnn/WangZWLZ17}.

As a significant problem in the area of multimedia retrieval~\cite{DBLP:conf/sigir/WangLWZZ15,DBLP:journals/tip/WangLWZZH15,DBLP:conf/ijcai/WangZWLFP16}, content-based image retrieval~\cite{DBLP:journals/cviu/WuWGHL18,DBLP:journals/pr/WuWLG18} is applied in many applications such as image processing and retrieval~\cite{DBLP:journals/corr/abs-1708-02288,DBLP:conf/pakdd/WangLZW14}. The basic objective of CBIR is to automatically extract low-level visual features from images, such as color, texture, edge, shape and etc. There are two important part of CBIR research, i.e., image representation and visual similarity measurement. Many researchers work for these two problem and many approaches have been proposed. Thomee et al.~\cite{DBLP:journals/ijmir/ThomeeL12} proposed a review of interactive search in image retrieval. Fu et al.~\cite{DBLP:conf/ICCC/Fu2016} proposed a solution which combines convolution neural network~\cite{DBLP:journals/pr/WuWGL18} (CNN) and CBIR for image retrieval. They utilized support victor machine (SVM) to train a hyerplane which can separate similar image pairs and dissimilar image pairs to a large degree. Norouzi et al.~\cite{DBLP:conf/nips/0002FS12} designed a mapping learning approach for large scale multimedia applications from high-dimensional data to binary codes that preserve semantic similarity.

The smart devices equipped high definition camera, wireless mobile communication module and GPS-module, like smartphone and tablet, are used by lots of users to take photos and shared on the Internet easily everyday and everywhere. These photos record nice sense and on the other hand, they have geo-tags which record the geographical information of where they were taken. We can treat these images with geo-tags as geo-visual objects which contains two parts of features, i.e., visual content features and geographical features. Like the problem of spatial textual query on road network, more and more people are beginning to pay close attention to geo-tagged image search on road network.

To the best of our knowledge, we are the first to study the problem of continuous top-$k$ geo-visual objects query on road network, which aims to search out $k$ best geo-visual objects generated from images with geo-tags taking into account two types of relevancy: (1)road network distance proximity between query location and objects, (2)visual content similarity between query and objects. Both of query and objects consist of geographical information and visual content information, and with the moving of query on road network, the query continuously returns satisfactory results. In order to describe this new problem more clearly and concretely, we introduce two examples as follows:

\begin{example}
\label{ex:example1}
As illustrated in Figure.~\ref{fig:fig1}, an user is driving back from work and she want to buy a handbag which is same as one in a photo. But she do not know its brand and the item number. Obviously, describing a handbag in great detail by a few words or a sentence is very challenging. Furthermore, she have no idea which nearby shop has the satisfactory style of handbags, she want to search out which shop has this type of handbag on her way home. In such case, she can input this photo and her current location information by her smartphone as a top-$k$ geo-visual object query on road network to a geo-tagged multimedia data retrieval system when her driving. Then this system according to the driving route returns a set containing $k$ geo-visual objects meeting her requirements. that can tell the user which shops have this kind of handbag and are close to her location. With the location of the user changing, the system will dynamically update the result set. In Figure.~\ref{fig:fig1}, the geo-visual objects are the cyan small point and the query point is in the red circle and the red arrow indicates it moving direction.
\end{example}

\begin{figure}[thb]
\newskip\subfigtoppskip \subfigtopskip = -0.1cm
\centering
\includegraphics[width=1\linewidth]{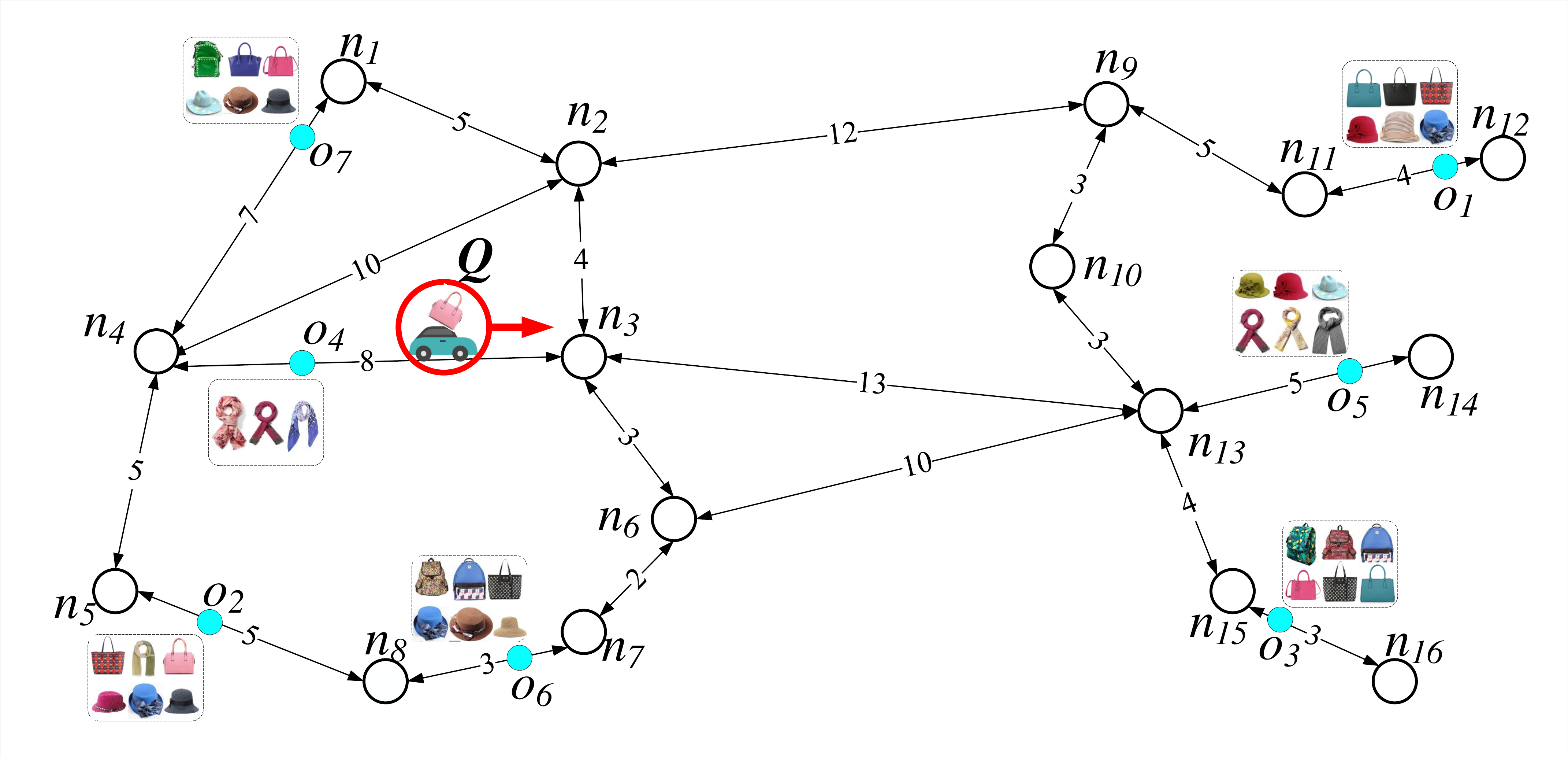}
\vspace{-1mm}
\caption{\small  An example of continuous top-$k$ geo-visual objects query for shopping }
\label{fig:fig1}
\end{figure}

\begin{figure}[thb]
\newskip\subfigtoppskip \subfigtopskip = -0.1cm
\centering
\includegraphics[width=1\linewidth]{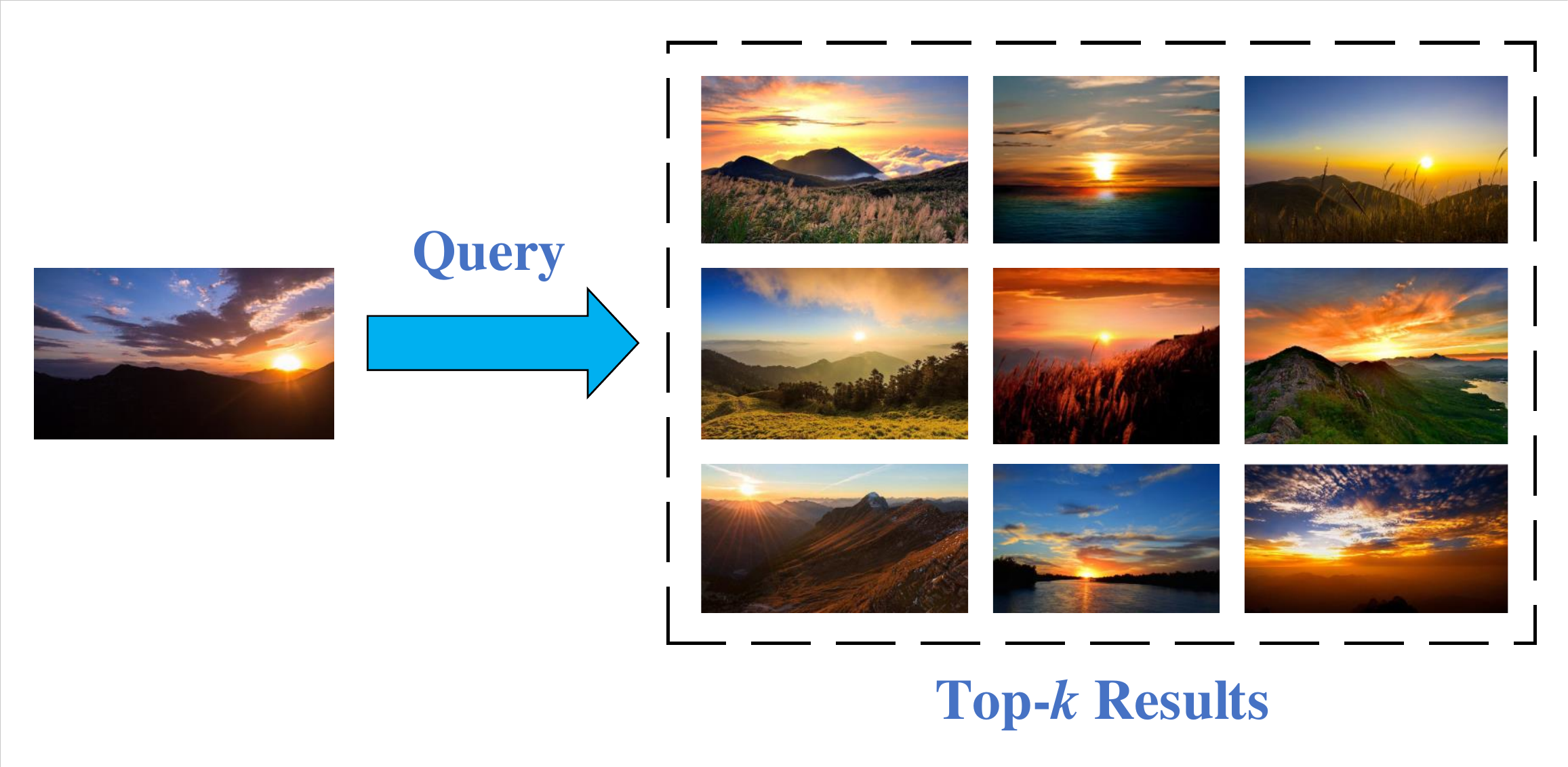}
\vspace{-1mm}
\caption{\small  An example of continuous top-$k$ geo-visual objects query for photographing }
\label{fig:fig2}
\end{figure}

\begin{example}
\label{ex:example2}
As illustrated in Figure.~\ref{fig:fig2}, a photographer traveling in a natural scenic area want to take some photo about sunset. As she is unfamiliar with the environment of this scenic area, she cannot choose a good position near her current location to take a good work. In this case, she do not need to depict the picture in her mind in detail in words. It is only necessary to select a photo which can represent her intention as the query image and then input it and her location to the geo-tagged multimedia retrieval system when she walks through the scenery. A top-$k$ geo-visual objects query will be processed and the set of results containing $k$ photos taking by other photographers or tourists can instruct her which position is the better choice and which route is the nearest.
\end{example}

This paper aims to implement the challenging application described in example~\ref{ex:example1} and example~\ref{ex:example2}, namely, implementing efficient continuous top-$k$ geo-visual objects query on road network. We propose a novel hybrid indexing framework named \vig which combines G-Tree and visual inverted index technique. To process geo-visual queries with VIG-Tree, we exploit the best-first traversal algorithm for retrieving the top-$k$ geo-visual objects. In order to further reduce the computational cost, we introduce the notion of safe interval in road network and design the results updating rule in the process of query moving. An efficient algorithm named moving monitor algorithm is developed to improve search efficiency.

\noindent\textbf{Contributions.}
Our main contributions can be summarized as follows:
\begin{itemize}
\item To the best of our knowledge, we are the first to study continuous top-$k$ geo-visual objects query on road network. Firstly we propose the definition of geo-visual object and top-$k$ geo-visual object query on road network. The socre function containing road network distance proximity component and visual content similarity component is designed and then we define the continuous geo-visual object and top-$k$ geo-visual object query on road network.
\item We present a hybrid index framework named VIG-Tree to support the geo-visual object search on road network, which combines G-Tree and visual inverted index technique. Then a algorithm named geo-visual search on road network is developed.
\item In order to reduce the computational cost of query, we propose the notion of safe interval and results updating rule, and then introduce moving monitor algorithm.
\item We have conducted extensive experiments on real multimedia dataset and road network dataset. Experimental results demonstrate that our solution outperforms the state-of-the-art method.
\end{itemize}

\noindent\textbf{Roadmap.} The remainder of this paper is organized as follows: We review the related work in Section 2. Section 3 introduces the definition of the continuous top-$k$ geo-visual objects query on road network as well as relevant notions. We propose a novel hybrid index framework named VIG-Tree and then present the geo-visual search on road network algorithm in section 4. Section 5 present the definition of safe interval and moving monitor algorithm. The experimental results are shown in Section 6, and finally we draw our conclusion of this paper in Section 7.

\section{Related work}
\label{related}
In this section, we introduce an overview of existing researches of image retrieval and spatial keyword query on road network, which are related to this work. To the best of our knowledge, there is no existing work on the problem of continuous top-$k$ geo-visual objects query on road network.

\subsection{Content-Based Image Retrieval}
Content-based image retrieval (CBIR for short) is important research problem in the area of multimedia system and retrieval~\cite{NNLS2018,DBLP:journals/ivc/WuW17,TC2018,DBLP:conf/mm/WuWS13}. Over the last decade, lots of researchers focus on this hot issue and made great progress. The Scale Invariant Feature Transform (SIFT for short) proposed by Lowe~\cite{DBLP:conf/iccv/Lowe99} is a popular approach which transforms an image into a large collection of local feature vectors. These local features are invariant to image translation, scaling, rotation, and partially invariant to illumination changes and affine or 3-dimension projection. This method consists of four stages~\cite{DBLP:journals/ijcv/Lowe04}: (1)scale-space peak selection; (2)keypoint localization; (3)orientation assignment; (4)keypoint descriptor. Based on SIFT, Ke et al.~\cite{DBLP:conf/cvpr/KeS04} proposed a descriptors named PCA-SIFT encode the salient aspects of the image gradient in the feature point's neighborhood and they used Principal Components Analysis (PCA) to the normalized gradient patch. Mortensen et al.~\cite{DBLP:conf/cvpr/MortensenDS05} introduced a feature descriptor which augments SIFT with a global context vector that adds curvilinear shape information from a much larger neighborhood. This descriptor is robust to local appearance ambiguity and non-rigid transformations. Liu et al.~\cite{DBLP:journals/inffus/LiuLW15} proposed a novel image fusion method for multi-focus images by applying dense SIFT. In order to improving the object retrieval performance, Zhang et al.~\cite{DBLP:journals/ijon/ZhangZZZW17} presented a novel method to employ CNN evidences to improve the SIFT matching accuracy.

Many other techniques have been designed to improve effectiveness and efficiency of content-based image retrieval. For content-based landmark image search problem, Zhu et al.~\cite{DBLP:journals/tcyb/ZhuSJZX15} proposed multimodal hypergraph (MMHG) to characterize the complex associations between landmark images. Furthermore, they developed a novel content-based visual landmark search system based on MMHG to improve effectiveness of searching. Xiao et al.~\cite{DBLP:journals/mta/XiaoQ14} presented a complementary relevance feedback-based content-based image retrieval system using short-term and long-term learning techniques to improve the retrieval performance. Jaffar et al.~\cite{} proposed a semantic image retrieval system in a web 3.0 environment incorporates Genetic algorithms with support vector machines and user feedbacks for image retrieval purposes.

These approaches for CBIR problem mentioned-above are not applicable to solve the problem of geo-tagged image search on road network because they just only consider the visual content during the retrieval, ignoring the geographical information during the query processing.

\subsection{Spatial Keyword Queries}
There are a great number of studies on spatial keywords queries techniques area, which is a hotspots interested by spatial database community.

\textbf{Sanpshot query}. Spatial keyword snapshot query~\cite{DBLP:conf/icde/ZhangZZL13} technique can be divided into three categories: text priority index, spatial priority index and loosely structured index . The text first index query is to extract relevant inverted files using text index, and then use spatial index to do spatial filtering. For a given spatial keyword query, we can perform an incremental nearest neighbor search on the extracted inverted document~\cite{DBLP:journals/tods/HjaltasonS99}, Until the $k$ objects that satisfies the keywords are found. Text first index query techniques include IF-R$^*$~\cite{DBLP:conf/cikm/ZhouXWGM05}, S2I~\cite{DBLP:conf/ssd/RochaGJN11}, I$^3$~\cite{DBLP:conf/edbt/ZhangTT13}, SFC-QUAD~\cite{DBLP:conf/cikm/ChristoforakiHDMS11}, IL-Quadtree~\cite{DBLP:conf/icde/ZhangZZL13,DBLP:journals/tkde/ZhangZZL16}.

Spatial priority index is to use spatial index to prune space, and then extract corresponding inverted files from keywords. It includes R-tree with inverted files, R-tree with bitmap files, and grid with inverted files. R-tree with inverted files technique comprises R$^*$-IF~\cite{DBLP:conf/cikm/ZhouXWGM05}, KR$^*$-tree~\cite{DBLP:conf/ssdbm/HariharanHLM07}, IR-tree and its variations~\cite{DBLP:conf/ssdbm/HariharanHLM07,DBLP:journals/pvldb/CongJW09,DBLP:journals/vldb/WuCJ12} like IR-tree(Li)~\cite{DBLP:journals/tkde/LiLZLLW11}, WIR-tree~\cite{DBLP:journals/tkde/WuYCJ12}, LBAK-tree~\cite{DBLP:conf/gis/AlsubaieeBL10}. R-tree with bitmap files index includes IR2-tree~\cite{DBLP:conf/icde/FelipeHR08}, SKI~\cite{DBLP:conf/ssdbm/CaryWR10}, bR-tree~\cite{DBLP:conf/icde/ZhangCMTK09}, and MHR-tree~\cite{DBLP:conf/icde/YaoLHH10}. IRGI~\cite{DBLP:conf/icde/ChenC2013} is the only approach for grid with inverted files, which is proposed to address the spatial keywords query problem of wireless data broadcast environment.

Loosely structured index~\cite{DBLP:conf/icde/ZhangOT10,DBLP:conf/sigir/ZhangCT14} constructs spatial index and textual index respectively from spatial dimension and textual dimension. However, There is no connection or loose connection between the spatial index and the textual index. During the query processing, it search out objects that satisfy spatial or time constraints from spatial index and textual index respectively, And then return the results by taking their intersection. Zhang et al.~\cite{DBLP:conf/icde/ZhangOT10} utilized R-tree to index spatial information, but each node adds a tag that records the path from the root node to this node. Their method applies inverted file to index documents, but the objects in the inverted file also carry information that can be used for spatial judgement. Zhang et al.~\cite{DBLP:conf/sigir/ZhangCT14} only used an inverted table to organize spatial information and textual information based on CA algorithm~\cite{DBLP:journals/jcss/FaginLN03}. They proposed a Top-$k$ query algorithm to solve Top-$k$ aggregate queries based on spatial keywords.

\textbf{Continuous query}.
Wu et al.~\cite{DBLP:conf/icde/WuYJC11} first time proposed spatial keyword continuous $k$ nearest neighbor queries. They used the doubly weighted Voronoi unit as the security area of the query to ensure that the user's activities are limited to this security area. For the problem that need to use special sort function in the solution introduced by Wu et al, Huang et al.~\cite{DBLP:conf/cikm/HuangLTF12} proposed a widely used ranking function based approach that works better in terms of efficiency and communication costs. For the problem that the security area query processing method is difficult to achieve the optimal update frequency and single update cost simultaneously, and existing spatial keyword continuous nearest neighbor query processing methods are based on sequential computation model design, Li et al.~\cite{DBLP:journals/pvldb/Li0QYZ014} presented a spatial keyword continuous $k$ neighbor query processing method based on influence set. However, these studies just focused on the spatial textual queries, rather than image retrieval with geo-tagged information.

\subsection{Spatial Keyword Queries on Road Network}
Spatial keyword queries on road network is an other hot issue in the area of spatial data, which can reflect the environment of daily life more realistically. Although the research on keyword query techniques on road network started late, some important achievements have been proposed. Like spatial keyword queries, this query problem can be divided into two categories: snapshot query and continuous query.

\textbf{Sanpshot query on road network}. Rocha-Junior et al.~\cite{DBLP:conf/edbt/Rocha-JuniorN12} studied keyword query on road network space in the first time. They proposed spatial textual index composed of space components, adjacency components, mapping components and inverted file components. And they proposed to build overlay network on the road network to improve query performance. For the same problem, Fang et al.~\cite{DBLP:conf/adc/FangZSWXLC15} proposed a hybrid index SG-Tree which combines G-tree with signature file. Gao et al.~\cite{DBLP:journals/tkde/GaoQZ015} studied a spatial keyword reverse kNN query on the road network and developed an algorithm based on filter refining framework. Furthermore, they introduced a count tree to improve query efficiency. Luo et al.~\cite{DBLP:journals/kbs/LuoJLWLL16} and Lin et al.~\cite{DBLP:conf/icde/LinXH16} studied  spatial keyword reverse KNN query on the road network. Luo et al.~\cite{DBLP:journals/kbs/LuoJLWLL16} proposed an algorithm based on network extension, and an algorithm to make use of the characteristics of network Voronoi graph. Lin et al.~\cite{DBLP:conf/icde/LinXH16} designed a hybrid index structure KcR-tree to store and summarize the space and keyword information of objects. And then they proposed three kinds of optimization techniques for query. In our previous work~\cite{DBLP:conf/edbt/ZhangZZLCW14}, we studied the keyword diversity query on road network and developed inverted index based on signature files. Besides, a segmentation-based method is used to improve the validity of signature files, and an efficient incremental and diversified spatial keyword search algorithm is designed.

\textbf{Continuous query on road network}.
For the continuous query problem, Li et al.~\cite{DBLP:journals/Huazhong/Li2013} studied the problem of spatial keyword continuous top-$k$ query in road network. They proposed a data structure consisting of a PMR-quad tree and three memory tables to store and retrieve relevant information of the road network and objects. Moreover, an adjustable formula for calculating the comprehensive distance is proposed by them to satisfy the different emphasis on keyword similarity and road network distance in various practical applications. The query result is corrected by monitoring the change of the comprehensive distance value of the candidate, so as to realize the continuous processing of the query. Guo et al.~\cite{DBLP:journals/geoinformatica/GuoSAT15} proposed the notion of safety road segment, i.e., when the query is moving on the road, the top-$k$ query results remain unchanged. They presented two algorithms for continuous query in road network QCA and OCA.

Apparently, the researches above just consider the textual similarity and road network distance proximity, They do not consider the situation that the user need to search a image with geo-tag on the road network. To the best of our knowledge, we are the first to study the problem of continuous top-$k$ geo-visual query on road network.

\section{Preliminary}
\label{preliminary}

In this section, we firstly review two basic approaches of image representation, namely invariant feature transform and bag-of-visual-words. Then we formally define the problem of interactive search for geo-tagged image and some concepts. Furthermore, we introduce the outline of our framework. Table~\ref{tab:notation} summarizes the notations frequently used throughout this paper to facilitate the discussion.

\begin{table}
	\centering
    \small
	\begin{tabular}{|p{0.18\columnwidth}| p{0.73\columnwidth} |}
		\hline
		\textbf{Notation} & \textbf{Definition} \\ \hline\hline
		~$\mathcal{D}_I$                                 & A given database of geo-tagged images                \\ \hline
        ~$I_k$                                           & The $k$-th geo-tagged images                 \\ \hline
        ~$\mathcal{O}$                                   & A given dataset of geo-visual object                \\ \hline
        ~$o_k$                                           & The $k$-th geo-visual object in $\mathcal{O}$        \\ \hline
	  	~$\mathcal{G}(\mathcal{N},\mathcal{E},\mathcal{W})$  & The graph model of road network         \\ \hline
        ~$n_i$                                           & The $i$-th node in a road network          \\ \hline
        ~$e_{i,j}$                                       & A edge connecting node $n_i$ and $n_j$        \\ \hline
        ~$w(e_{i,j})$                                    & The weight of edge $e_{i,j}$                  \\ \hline
        ~$o_k.\lambda$                                   & The geo-location information descriptor of $o_k$     \\ \hline
		~$o_k.\psi$                                      & A visual content descriptor of $o_k$            \\ \hline
        ~$p$                                             & A location point on road network             \\ \hline
        ~$\mathcal{V}$                                   & A vector of visual words    \\ \hline
        ~$v$                                             & A visual word        \\ \hline
        ~$N$                                             & A network node        \\ \hline
        ~$\mathcal{I}$                                   & A VIG-Tree index        \\ \hline
        ~$\mathcal{Q}$                                   & A top-$k$ geo-visual query on road network             \\ \hline
        ~$\mathcal{P}(n_i,n_j)$                          & A path connecting $n_i$ and $n_j$             \\ \hline
        ~$\mathcal{P}_s(n_i,n_j)$                        & The shortest path connecting $n_i$ and $n_j$             \\ \hline
        ~$\delta(n_i,n_j)$                               & The travel distance between $n_i$ and $n_j$             \\ \hline
        ~$\delta(\mathcal{P}(n_i,n_j))$                  & The distance of path $\mathcal{P}(n_i,n_j)$          \\ \hline
        ~$\mu$                                           & A parameter used to balance the importance between road network distance proximity and visual content similarity. \\ \hline
        ~$X$                                             & The longitude of a geo-location              \\ \hline
        ~$Y$                                             & The latitude of a geo-location              \\ \hline
        ~$k$                                             & The number of final results              \\ \hline
        ~$\mathcal{F}(\mathcal{Q},o)$                    & Score function measuring the relevance of $\mathcal{Q}$ and $o$ \\ \hline
        ~$Dia(\mathcal{G}(\mathcal{N},\mathcal{E},\mathcal{W}))$ & The diameter of road network $\mathcal{G}(\mathcal{N},\mathcal{E},\mathcal{W})$      \\ \hline
        ~$Dst(\mathcal{Q}.\lambda,o.\lambda)$            & The road network distance proximity of $\mathcal{Q}$ and $o$                               \\ \hline	
        ~$\mathcal{R}$                                   & The result set of T$k$GVQ on road network      \\ \hline
        ~$Sim(\mathcal{Q}.\psi,o.\psi)$                  & The visual content similarity between $\mathcal{Q}$ and $o$.                     \\ \hline
        ~$\mathcal{L}(e_{i,j})$                          & The safe segment of edge $e_{i,j}$                \\ \hline
        ~$\mathcal{S}$                                   & A safe interval      \\ \hline
	\end{tabular}
    \caption{The summary of notations} \label{tab:notation}	
\end{table}

\subsection{Problem Definition}
\begin{definition}[\textbf{Road Network}] \label{def:igis roadnetwork}
Without loss of generality, A road network is modeled as a simple weighted undirected planar graph represented by $\mathcal{G}(\mathcal{N},\mathcal{E},\mathcal{W})$ and $\mathcal{N},\mathcal{E},\mathcal{W} \neq \emptyset$, wherein $\mathcal{N} = \{n_1,n_2,...,n_{|\mathcal{N}|}\}$ is the set of nodes which represent road intersections or road endpoints, and $|\mathcal{N}|$ is the total number of nodes. The set of edges $\mathcal{E} = \{e_{i,j}|i,j \in \textbf{N}^+, \mbox{and } i \in [1,|\mathcal{N}|], j \in [1,|\mathcal{N}|], i \neq j\}$ denotes all the road segments connecting two nodes, i.e., $e_{\alpha,\beta}=(n_\alpha,n_\beta)$ represents the road segment connecting node $n_\alpha$ and $n_\beta$. Let $p$ be a location point on $\mathcal{G}(\mathcal{N},\mathcal{E},\mathcal{W})$, if $p$ is located on edge $e_{i,j}$, we denote it as $p \in e_{i,j}$. $\mathcal{W} = \{w(e_{i,j})|e_{i,j} \in \mathcal{E}\}$ denotes the set of weights which represent the travel distance of road segments or trip time, i.e., $w(e_{i,j})$ is the trip distance or time cost traveling through from node $n_i$ to $n_j$. To simplifying the presentation, we use travel distance hereinafter. We define the travel distance between any two positions $p$ and $p'$ on road networks as $\delta(p,p')$, and for an edge $e_{i,j}$, $w(e_{i,j}) = \delta(n_i,n_j)$. Furthermore, in this paper we assume that each edge $e_{i,j} \in \mathcal{E}$ is bi-directional, and the distance or time cost of it is irrelevant to the direction. It is described formally in Assumption~\ref{asmp:direction}.
\end{definition}

\newtheorem{assumption}{Assumption}
\begin{assumption}\label{asmp:direction}
Given a road network $\mathcal{G}(\mathcal{N},\mathcal{E},\mathcal{W})$ and $\mathcal{N},\mathcal{E},\mathcal{W} \neq \emptyset$. $\forall e_{i,j} \in \mathcal{E}$, $\exists e_{i,j} \equiv e_{j,i}$ and $w(e_{i,j}) \equiv w(e_{j,i})$.
\end{assumption}

All the conceptions and algorithms which will be stated in the following sections are meaningful and effective based on Assumption~\ref{asmp:direction} is this work.

\begin{definition}[\textbf{Shortest Path}] \label{def:igis shortestpath}
Given a road network $\mathcal{G}(\mathcal{N},\mathcal{E},\mathcal{W})$, $n_i, n_j \in \mathcal{N}$ are two node. A path between $n_i$ and $n_j$ is defined as $\mathcal{P}(n_i,n_j) = \{e^{(1)},e^{(2)},...,e^{(K)}\}$, $K$ is the number of edges in this path. Its distance is $\delta(\mathcal{P}(n_i,n_j))$. Obviously, the distance of a path is equal to the sum of weights of all edges in this path, i.e.,
\begin{equation}\label{equ:distanceofpath}
\delta(\mathcal{P}(n_i,n_j)) = \sum\limits_{k=1}^{K}w(e^{(k)})
\end{equation}
where $K$ is the number of edges in the path $\mathcal{P}(n_i,n_j)$, $e^{(k)}$ represents the $k$-th edge. Based on the notion of path, we define the shortest path connecting $n_i$ and $n_j$ as $\mathcal{P}_s(n_i,n_j)$, which has the smallest distance among all paths connecting these two nodes. Formally, the distance of $\mathcal{P}_s(n_i,n_j)$ is described as:
\begin{equation}\label{equ:shortestpath}
\delta(\mathcal{P}_s(n_i,n_j)) = min(\{\sum\limits_{k=1}^{K}w(e^{(k)})|K \in \textbf{N}^+, \mbox{and } K \in [1,|\mathcal{E}|], e^{(k)} \in \mathcal{P}(n_i,n_j)\})
\end{equation}
where the function $min(\mathcal{X})$ is to return the minimum value of element in the set $\mathcal{X}$, and $\textbf{N}^+$ represents the set of positive integer.
\end{definition}

\begin{figure}[thb]
\newskip\subfigtoppskip \subfigtopskip = -0.1cm
\begin{minipage}[b]{1\linewidth}
\begin{center}
     \subfigure[{The model of road network and shortest path}]{
     \includegraphics[width=0.48\linewidth]{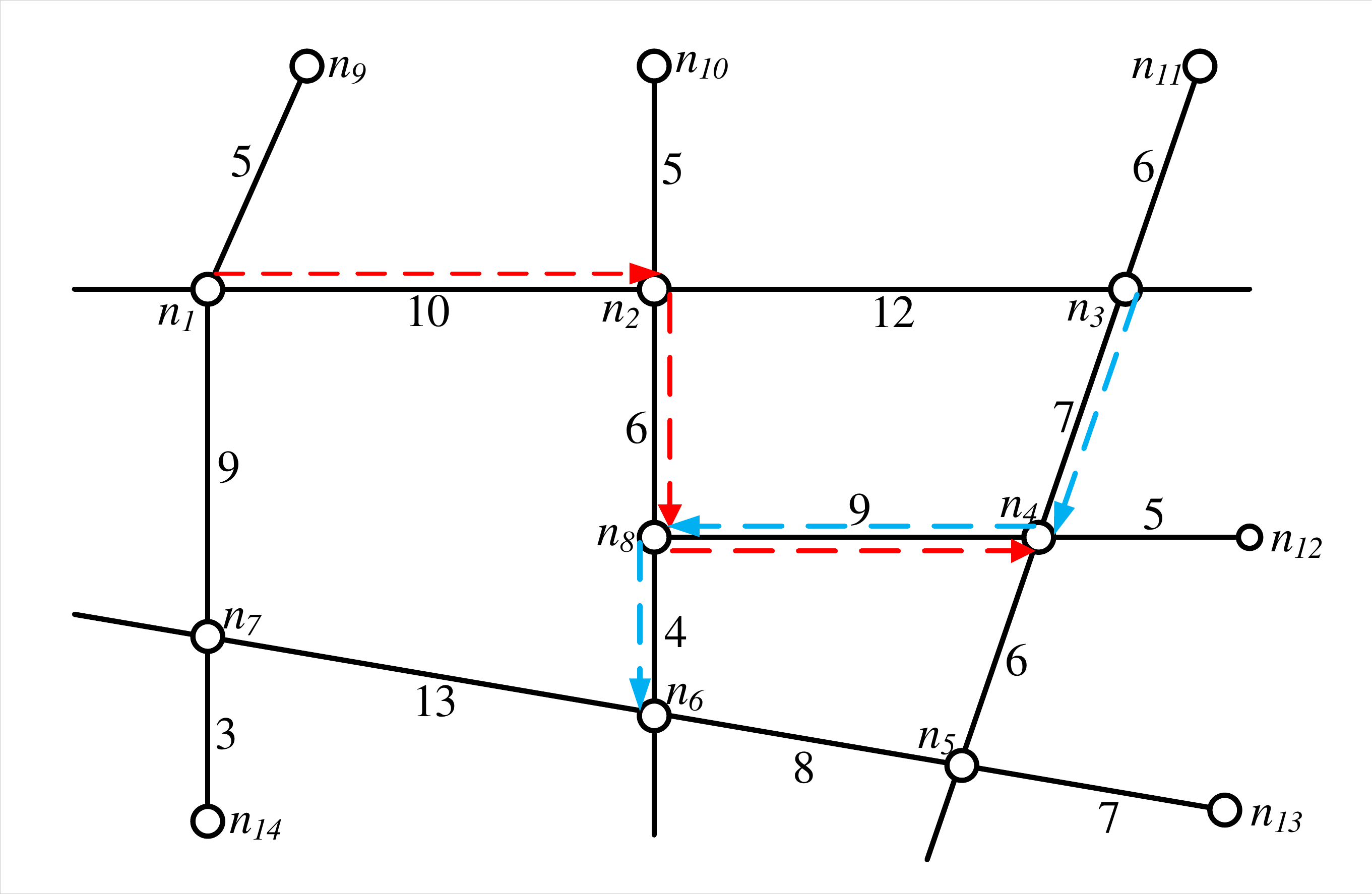}
     }
     \subfigure[{A continuous top-$k$ geo-visual query}]{
     \includegraphics[width=0.48\linewidth]{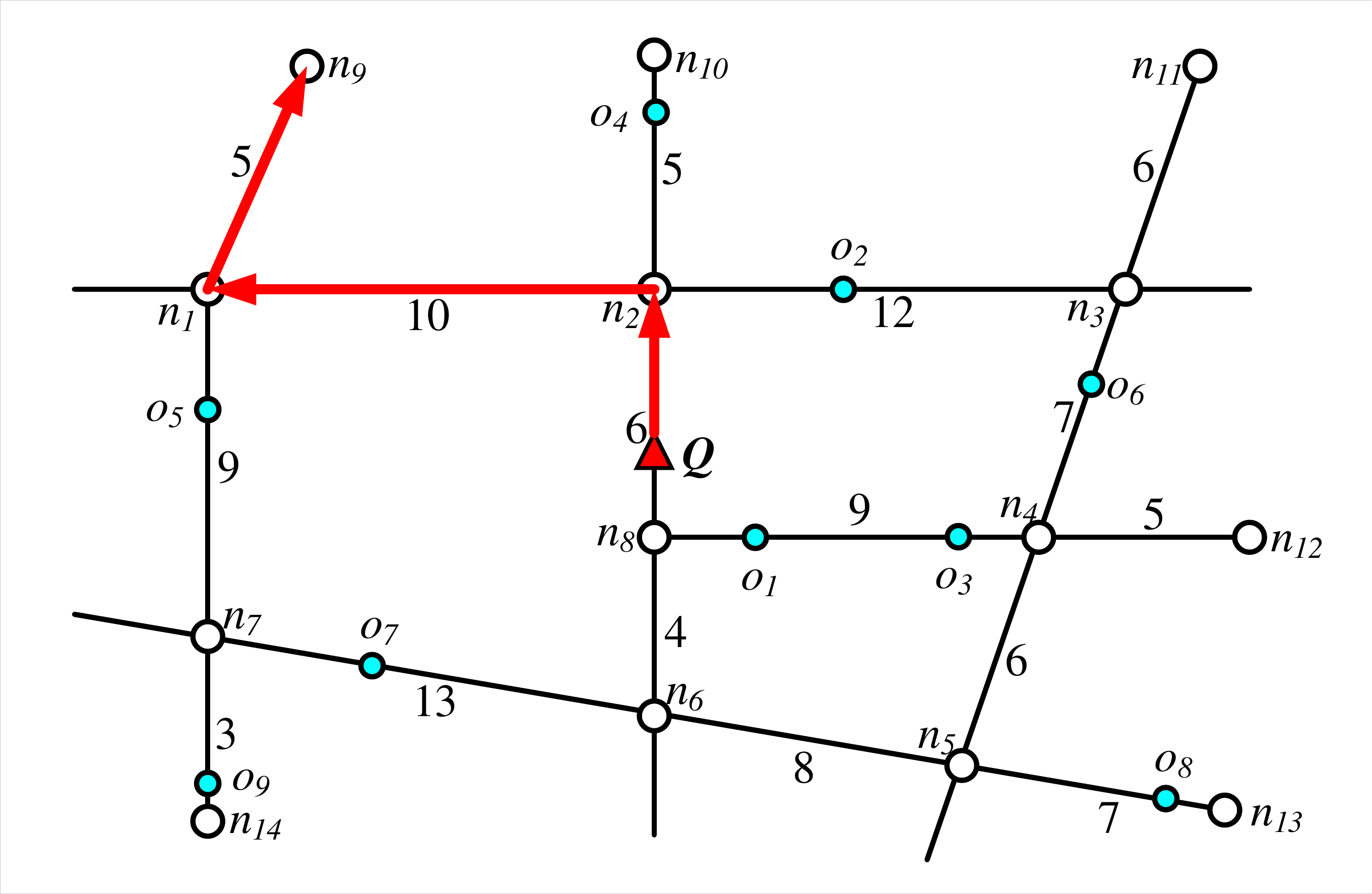}
     }
     \captionsetup{justification=centering}
\vspace{-1mm}
\caption{\small An example of road network and continuous top-$k$ geo-visual query}
\label{fig:fig3}
\end{center}
\end{minipage}
\end{figure}

\newtheorem{exmp}{Example}
\begin{exmp}\label{exmp:roadnetwork}
Figure.~\ref{fig:fig3}(a) shows an example of road network contains 14 nodes and 16 edges. The set of node is $\mathcal{N} = \{n_1,n_2,n_3,n_4,n_5,n_6,n_7,n_8,n_9,n_{10},n_{11},n_{12},n_{13},n_{14}\}$. The weight of each edge is shown aside, e.g., $w(e_{1,2})=10$, $w(e_{6,8})=4$. The shortest path between $n_1$ and $n_4$ consists of $e_{1,2}$, $e_{2,8}$ and $e_{8,4}$, which is highlight by red dashed line with arrow. The distance of it is $\delta(\mathcal{P}_s(n_1,n_4)) = 25$. The blue dashed line denotes the shortest path connecting $n_3$ and $n_6$, $\delta(\mathcal{P}_s(n_3,n_6)) = 20$.
\end{exmp}

\begin{definition}[\textbf{Geo-visual Object}] \label{def:igis geovisual-obj}
Given a geo-tagged image database $\mathcal{D}_I$ which storages $|\mathcal{O}|$ image, i.e., $\mathcal{D}_I = \{I_1,I_2,...,I_{|\mathcal{D}_I|}\}$. A geo-visual objects dataset is defined as $\mathcal{O}=\{o_1, o_2,..., o_{|\mathcal{O}|}\}$ containing $|\mathcal{O}|$ objects. Each geo-visual object $o_k \in \mathcal{O}$ resides an edge $e_{i,j}$ of a road network, and it is associated with a geographical information descriptor $o_k.\lambda$ and a visual content descriptor $o_k.\psi$. More specifically, $o_k.\lambda$ is represented by a 2-dimensional geo-location extracted from the geo-tag of image $I_k$, i.e., $o_k.\lambda = (X,Y)$, wherein $X$ and $Y$ are longitude and latitude respectively. The visual content descriptor is defined as $o_k.\psi = (v_1,v_2,...,v_m)$, which is a visual words vector generated from a geo-tagged image $I_k$ by low-level feature extraction.

If a road network contains several geo-visual objects which are located on edges, we can denote it as $\mathcal{G}(\mathcal{N},\mathcal{E},\mathcal{W},\mathcal{O})$. As the definition presented above, $\mathcal{O}$ is the set of geo-visual objects. On the other hand, we define the distance between a geo-visual object $o_k$ and endpoint nodes $n$ and $n'$ of an edge on which it lies as $\delta(\mathcal{P}(o_k,n))$ and $\delta(\mathcal{P}(o_k,n'))$ based on Definition~\ref{def:igis shortestpath}. The shortest distance between two objects $o_k$ and $o_l$ is denoted as $\delta(\mathcal{P}_s(o_k,o_l))$.
\end{definition}

\begin{definition}[\textbf{Top-$k$ Geo-visual Query (T$k$GVQ)}] \label{def:igis topkgvq}
Given a road network with geo-visual objects $\mathcal{G}(\mathcal{N},\mathcal{E},\mathcal{W},\mathcal{O})$, a top-$k$ geo-visual query is defined as $\mathcal{Q}=(\lambda,\psi,k)$, in which $\mathcal{Q}.\lambda$ denotes the location of query, $\mathcal{Q}.\psi$ is a visual works vector generated from the query image, and $\mathcal{Q}.k$ is the number of requested objects. \textbf{T$k$GVQ} aims to search out $k$ geo-visual objects $\mathcal{R}=\{o_1,o_2,...,o_k\}$ from $\mathcal{G}(\mathcal{N},\mathcal{E},\mathcal{W},\mathcal{O})$, which are ranked according to the score measured by function $\mathcal{F}(\mathcal{Q},o)$, i.e.,
\begin{equation*}\label{equ:TkGVQ}
\mathcal{R}=\{\mathcal{R} \subseteq \mathcal{O},|R|=k|\forall o \in \mathcal{O},o' \in \mathcal{O} \setminus \mathcal{R}, \mathcal{F}(\mathcal{Q},o)>\mathcal{F}(\mathcal{Q},o')\}
\end{equation*}
and the score function is defined as follows:
\begin{equation}\label{equ:scorefunc}
\mathcal{F}(\mathcal{Q},o) = \mu Dst(\mathcal{Q}.\lambda,o.\lambda)+(1-\mu)Sim(\mathcal{Q}.\psi,o.\psi)
\end{equation}
where function $Dst(\mathcal{Q}.\lambda,o.\lambda)$ measures the distance proximity between $\mathcal{Q}$ and $o$, function $Sim(\mathcal{Q}.\psi,o.\psi)$ calculates the visual content relevance between these two visual words vectors, and $\mu \in [0,1]$ is used to balance the importance between road network distance proximity and visual content similarity. If $\mu > 0.5 $, the road network distance proximity is more important than visual content similarity, and conversely, If $\mu < 0.5$, the visual content similarity plays a more considerable role. Note that the geo-visual objects with the \textbf{small score values} are preferred (i,e., ranked higher).
\end{definition}

According to the definition of T$k$GVQ, the score of geo-visual objects is determined by road network distance proximity and visual content similarity, which are measured by function $Dst(\mathcal{Q}.\lambda,o.\lambda)$ and $Sim(\mathcal{Q}.\psi,o.\psi)$ respectively. next we describe these two functions in formal to explain how to calculate the distance proximity and visual content similarity. Firstly, an important notion named road network diameter is presented in the following definition.

\begin{definition}[\textbf{Road Network Diameter}] \label{def:igis rnd}
Given a road network $\mathcal{G}(\mathcal{N},\mathcal{E},\mathcal{W})$, the road network diameter $Dia(\mathcal{G}(\mathcal{N},\mathcal{E},\mathcal{W}))$ is defined as follows:
\begin{equation}\label{equ:rndiameter}
Dia(\mathcal{G}(\mathcal{N},\mathcal{E},\mathcal{W})) = max(\{\delta(\mathcal{P}_s(n_i,n_j))|\forall n_i,n_j \in \mathcal{N}\})
\end{equation}
where the function $max(\mathcal{X})$ is to return the maximum value of element in the set $\mathcal{X}$. It is apparent that the diameter of a road network is the maximum shortest path connecting any two node.
\end{definition}

Based on the conception of road network diameter and shortest path, next we propose the definition of road network distance proximity.

\begin{definition}[\textbf{Road Network Distance Proximity}] \label{def:igis rndp}
Given a road network with geo-visual objects $\mathcal{G}(\mathcal{N},\mathcal{E}, \mathcal{W},\mathcal{O})$, a geo-visual object $o \in \mathcal{O}$ and a top-$k$ geo-visual query $\mathcal{Q}$, the road network distance proximity measure is defined as:
\begin{equation}\label{equ:rndistance}
Dst(\mathcal{Q}.\lambda,o.\lambda) = \frac{\delta(\mathcal{P}_s(\mathcal{Q}.\lambda,o.\lambda))}{Dia(\mathcal{G}(\mathcal{N},\mathcal{E}, \mathcal{W},\mathcal{O}))}
\end{equation}
where $\delta(\mathcal{P}_s(\mathcal{Q}.\lambda,o.\lambda))$ is the distance of shortest path connecting $\mathcal{Q}$ and $o$. Obviously, the road network distance proximity $Dst(\mathcal{Q}.\lambda,o.\lambda)$ is determined by the shortest path between the query location and the object. Besides, it is easily to prove that the value range of $Dst(\mathcal{Q}.\lambda,o.\lambda)$ is $[0,1]$.
\end{definition}

\begin{proof}
We prove the proposition $Dst(\mathcal{Q}.\lambda,o.\lambda) \in [0,1]$ by contradiction. (1)Firstly, we assume that $Dst(\mathcal{Q}.\lambda,o.\lambda) > 1$, i.e., $\delta(\mathcal{P}_s(\mathcal{Q}.\lambda,o.\lambda)) > $ $Dia(\mathcal{G}(\mathcal{N},\mathcal{E}, \mathcal{W},\mathcal{O}))$. Thus, according to Definition~\ref{def:igis rnd}, $\delta
(\mathcal{P}_s(\mathcal{Q}.\lambda,o.\lambda)) > max(\{\delta(\mathcal{P}_s(n_i,n_j))|\forall n_i,n_j \in \mathcal{N}\})$. Without loss of generality, assume $\mathcal{Q}$ is located on the edge $e_{\alpha,\beta} = (n_\alpha,n_\beta) \in \mathcal{E}$ and $o$ is located on the edge $e_{\gamma,\iota} = (n_\gamma,n_\iota) \in \mathcal{E}$, and $\mathcal{P}_s(n_\alpha,n_\iota)$ contains $e_{\alpha,\beta}$ and $e_{\gamma,\iota}$. Thus $\delta(\mathcal{P}_s(\mathcal{Q}.\lambda,o.\lambda)) \leq \delta(\mathcal{P}_s(n_\alpha,n_\iota))$. It is clearly that $\mathcal{P}_s(n_\alpha,n_\iota) \in \{\delta(\mathcal{P}_s(n_i,n_j))|\forall n_i,n_j \in \mathcal{N}\}$, $\delta(\mathcal{P}_s(n_\alpha,n_\iota)) \leq$ $max(\{\delta(\mathcal{P}_s(n_i,n_j))|\forall n_i,n_j \in \mathcal{N}\})$. Thus, $\delta(\mathcal{P}_s(\mathcal{Q}.\lambda,o.\lambda)) \leq max(\{\delta(\mathcal{P}_s(n_i,n_j))|\forall n_i,n_j \in \mathcal{N}\})$. This is contradictory to the assumption that $\delta(\mathcal{P}_s(\mathcal{Q}.\lambda,o.\lambda)) > Dia(\mathcal{G}(\mathcal{N},\mathcal{E}, \mathcal{W},\mathcal{O}))$. (2)It is easily to know that the distance of shortest path and network diameter are positive, and if $\mathcal{Q}$ and $o$ at the same place, then $\delta(\mathcal{P}_s(\mathcal{Q}.\lambda,o.\lambda)) = 0$, i.e., $\delta(\mathcal{P}_s(\mathcal{Q}.\lambda,o.\lambda)) \geq 0$ and $Dia(\mathcal{G}(\mathcal{N},\mathcal{E}, \mathcal{W},\mathcal{O})) > 0$. Therefore $Dst(\mathcal{Q}.\lambda,o.\lambda) \geq 0$. $\hfill\blacksquare$
\end{proof}

The other important part of score function, visual content similarity function $Sim(\mathcal{Q}.\psi,o.\psi)$, relies on Bag-of-Visual-Words (BoVWs for short) model and SIFT technique to. The local features of an image which are extracted from image key-points are encoded into a visual words vector.

SIFT is used to detect and describe local features of an image, which transforms an image into a large collection of local feature vectors. These local features are invariant to image translation, scaling, rotation, and partially invariant to illumination changes and affine or 3-dimension projection~\cite{DBLP:journals/ijcv/Lowe04}. It has four main stages introduced as follows:

\textbf{(1)Scale-space extrema detection}. The first stage is named scale-space extrema detection, which searches all scales and image location. The algorithm can recognize the potential interest points by using a difference-of-Gaussian (DoG) function convolved with the image.

\textbf{(2)Keypoint localization}. The second stage is keypoint localization aiming to precisely localize the keypoints. In this stage, two kinds of extremum point will be discarded: low contrast extreme points and unstable edge response points.

\textbf{(3)Orientation assignment}. The third stage is named orientation assignment. In this stage, one or more orientations are designated to each keypoint based on gradient directions of the local image. all computations are performed in a scale-invariant manner for each image. The dominant orientation of the keypoint is designated by the highest peak in the histogram after orientation histogram formed by computing the magnitude and orientation within a region around the keypoint.

\textbf{(4)Keypoint descriptor}. The last stage, named keypoint descriptor, which measures the gradients of local image at the selected scale in the region around each keypoint. These gradients are transformed into a representation, which allows for relatively large local shape distortion and illumination changes.

An image is represented as a vector of visual words denoted as $\mathcal{V}$ constructed by vector quantization of feature descriptors utilizing BoVWs extending from bag-of-words (BoW for short) technique which is widely used in textual similarity measurement~\cite{DBLP:conf/iccv/SivicZ03}. In this approach, $k$-means is applied to create the visual words. In other words, visual words are defined by the $k$-means cluster centers and the SIFT features in every images are then assigned to the nearest cluster center to give a visual word representation~\cite{DBLP:conf/bmvc/ChumPZ08}.

Based on the notion introduced above, the similarity of two images $I_1$ and $I_2$ can be calculated by the following equation:
\begin{equation}\label{equ:similarityof2img}
Sim(\mathcal{V}_1,\mathcal{V}_2) = (1-\frac{|\mathcal{V}_1 \cap \mathcal{V}_2|}{|\mathcal{V}_1 \cup \mathcal{V}_2|})
\end{equation}
where $\mathcal{V}_1$ and $\mathcal{V}_2$ are two sets of visual words generated from $I_1$ and $I_2$. In this method, all the words are equally important. Obviously,

According to the conception of BoVWs and the image similarity measurement, we propose the definition of visual content similarity to measure the visual relevance between two geo-visual objects.

\begin{definition}[\textbf{Visual Content Similarity}] \label{def:igis vcs}
Given a T$k$GVQ denoted as $\mathcal{Q}$ and a geo-visual object $o$, the visual content similarity between $\mathcal{Q}$ and $o$ is measured by the following function:
\begin{equation}\label{equ:vcsimilarity}
Sim(\mathcal{Q}.\psi,o.\psi) = (1-\frac{|\mathcal{Q}.\psi \cap o.\psi|}{|\mathcal{Q}.\psi \cup o.\psi|})
\end{equation}
\end{definition}

\begin{definition}[\textbf{Continuous Top-$k$ Geo-visual Query (CT$k$GVQ)}] \label{def:igis ctopkgvq}
Given a road network with geo-visual objects $\mathcal{G}(\mathcal{N},\mathcal{E},\mathcal{W},\mathcal{O})$, a continuous top-$k$ geo-visual query is defined as $\mathcal{Q}_c = (\lambda,\psi,k,[\tau_s,\tau_e])$, wherein $\lambda$,$\psi$ and $k$ have the same meaning with the relevant part in Definition~\ref{def:igis topkgvq}, $[\tau_s,\tau_e]$ is a time range in which a CT$k$GVQ is running. For each new location $\mathcal{Q}.\lambda$, CT$k$GVQ returns a new result set $\mathcal{R}$.
\end{definition}

\begin{exmp}
To demonstrate the CT$k$GVQ problem plainly, we consider a simple example in Figure.~\ref{fig:fig3}(b). A CT$k$GVQ denoted as $\mathcal{Q}$ is represented by a red triangle which is moving along the route marked by red line with arrow. Clearly, it will pass through node $n_2$, $n_1$ and $n_9$ successively. During the movement, the result of $\mathcal{Q}$ will be updated continuously. When $\mathcal{Q}$ is near $n_2$, the result set is $\{o_2,o_4\}$. When it moves to $n_1$, the result set will be $\{o_5\}$.
\end{exmp}

\section{Hybrid Indexing for Network Aware Continuous Image Retrieval}
\label{hybridIndex}

In this section, we present a \textbf{V}isual \textbf{I}nverted \textbf{G}-tree (\vig for short) that supports the following required functions for continuous geo-visual search and ranking on road network: 1)\textbf{visual filtering}: all the visually irrelevant nodes and objects have to be discarded as early as possible to cut down the search cost; 2)\textbf{network filtering}: all the nodes,which are farther in networks, have to access as later as possible to avoid unnecessary network expansion. 3)\textbf{relevance computation and ranking}: since only the top-k geo-visual objects are returned and $k$ is expected to be much smaller than the total number of match objects, it is desirable to have an incremental search process that integrates the computation of the joint relevance, and object ranking seamlessly so that the search process can stop as soon as the top-$k$ objects are identified.

\subsection{Hybrid Index Framework: VIG-Tree}
\label{vigtree}

\vig is a combination of an G-Tree and visual inverted index. In particular, each node of an \vig contains both
network information and visual words information; the former consists of a distance matrix and a set of minimum bounding areas (MBR) of children nodes, and the latter in the form of a visual inverted index file for the edges or nodes rooted at the node.

Fig.~\ref{fig:fig4} presents the basic indexing architecture of \vig. In the \vig, a leaf node L consists of four parts: MBR component, matrix component, visual component and a subnetwork component. In the following, we describe each component in more detail.

\begin{figure}[thb]
\newskip\subfigtoppskip \subfigtopskip = -0.1cm
\begin{minipage}[b]{1\linewidth}
\begin{center}
     \subfigure[{Graph partition}]{
     \includegraphics[width=0.9\linewidth]{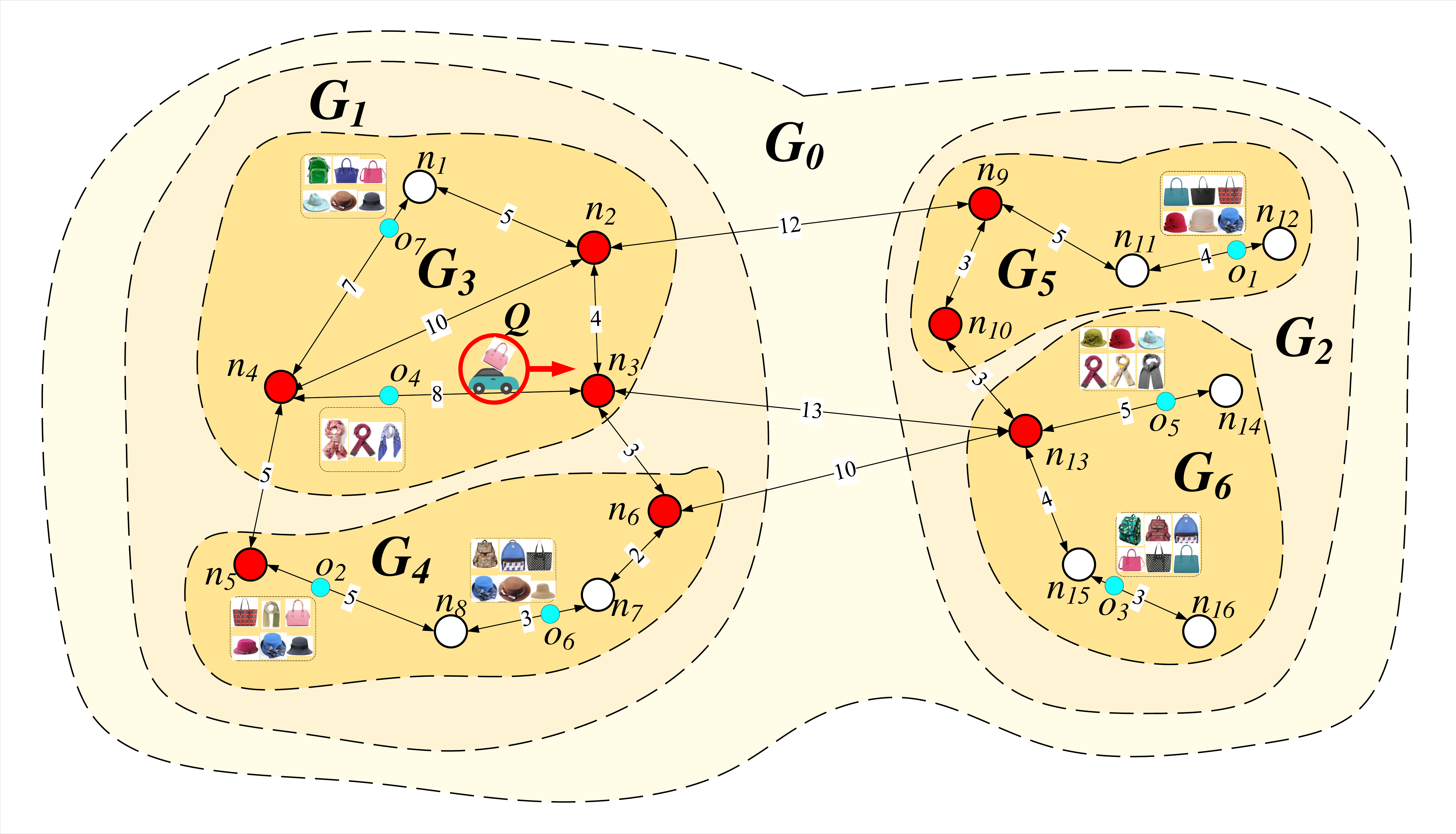}
     }
     \subfigure[VIG-Tree]{
     \includegraphics[width=0.9\linewidth]{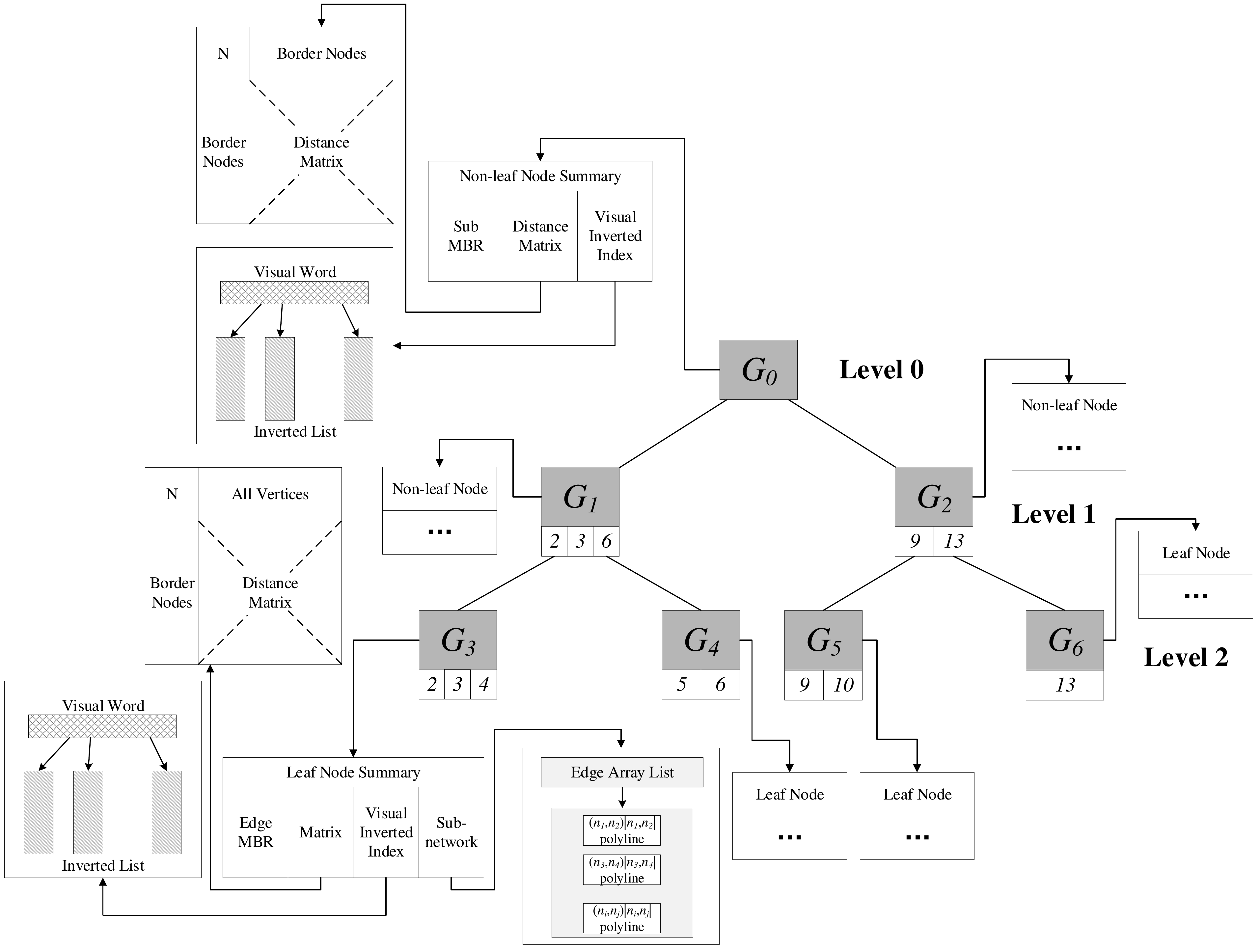}
     }
     \captionsetup{justification=centering}
\vspace{-1mm}
\caption{\small A VIG-Tree}
\label{fig:fig4}
\end{center}
\end{minipage}
\end{figure}

\begin{itemize}
\item{\textbf{MBR component.}} The MBR component contains a set of MBRs that encloses the corresponding edges which roots at current node, and are used to find the edges that cover the original query location.

\item{\textbf{Matrix component.}} The matrix component consists of three parts, the rows are all borders in the node, the columns are all vertices that are rooted at this node, the entry of distant matrix records the shortest-path distance between the border and the vertex.

\item{\textbf{Visual component.}} The visual component consists of a list of inverted indexes of unique visual words of leaf node. Each visual inverted index corresponds to a visual word $v$ and pointing to a list of edges that contain $v$.

\item{\textbf{Subnetwork component.}} The subnetwork component combines network connectivity information and detail visual word information of geo-visual objects of each edge. It points to an edge array list of current node. The edge array consists of the edge id (e.g., ($n_1$, $n_2$)), the length of the edge (e.g., ($\mid n_1$, $n_2 \mid$)), and a detail polyline to describe the edge. The detail polyline is used to locate the edge where the query lies, and start a network expansion from the nodes of the edge.
\end{itemize}

A non-leaf node N of \vig is composed of MBR component, Matrix component, and visual component. To be specific, the MBR component contains a set of MBRs that encloses the corresponding nodes which roots at current node. Similar to leaf node, the matrix component of non-leaf node also consists of three parts, but the major difference is that both rows and columns are  all borders that rooted at this node, and the value of each entry of distance matrix is the shortest-path distance between the two borders. Similarly, The visual component of non-leaf node is composed of a set of visual inverted indexes which is generated by aggregating the visual information from its children node, and points to a list of nodes that contain corresponding visual word.

Similar to ~\cite{DBLP:conf/edbt/Rocha-JuniorN12}, we also used a B-tree to manage the edge information. The key of the B-tree is composed of edge id and visual word id to the inverted list that contains the geo-visual objects lying
on the edge with visual word $v$ in their description. This key value points to an inverted list which stores the geo-visual objects lying on the corresponding edge that have a visual word $v$ in their description.For each geo-visual objects, the object id and the network distance between the geo-visual objects and the related node of the edge are stored. This information is used to compute the visual content similarity between the geo-visual object and query.

\subsection{Processing of Geo-visual Queries on Road Network}\label{sec:snapshot query}
We proceed to present an important metric, the minimum visual network distance $minVND$, which will be used in the geo-visual query processing. Given a geo-visual query $q$ and a network node $N$ in the \vig, the metric $minVND$ offers a lower bound on the actual visual network distance between query $\mathcal{Q}$ and the geo-visual objects enclosed in the rectangle of network node $N$. The search space of \vig can be efficiently pruned by this bound.

\begin{definition} [$minVND(\mathcal{Q},N)$]
The distance of a geo-visual query $\mathcal{Q}$ from a node $N$ in
the \vig, denoted as $minVND(\mathcal{Q},N)$, is defined as follows:
\begin{equation}\label{eq:vig mindst}
\begin{aligned}
minVND(\mathcal{Q},N)=\mu minND(\mathcal{Q}.\lambda,N.b.\lambda)+ \\
(1-\mu)minVD(\mathcal{Q}.\psi,N.\psi)
\end{aligned}
\end{equation}
where
$minND(\mathcal{Q}.\lambda, N.b.\lambda)$ is the length of the shortest path between $\mathcal{Q}.\lambda$ and $N.b.\lambda$,
$minVD(\mathcal{Q}.\psi,N.\psi)$ is the minimum visual content relevance between $\mathcal{Q}.\psi$ and $N.\psi$.
\end{definition}

\begin{theorem}\label{lemma:vig point_prune}
Given a geo-visual query $\mathcal{Q}$, a node $N$, and a set of geo-visual objects $\mathcal{O}$ in node $N$, for any $o\in\mathcal{O}$, we have $\mathcal{F}(\mathcal{Q},o)\leq \mathcal{F}(\mathcal{Q},N)$.
\end{theorem}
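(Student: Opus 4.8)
The plan is to read $\mathcal{F}(\mathcal{Q},N)$ as the node-level quantity $minVND(\mathcal{Q},N)$ introduced in Equation~\eqref{eq:vig mindst}, so that the claim to be established is literally $\mathcal{F}(\mathcal{Q},o)\le minVND(\mathcal{Q},N)$ for every $o\in\mathcal{O}$ rooted at $N$. The decisive observation is that $\mathcal{F}(\mathcal{Q},o)$ and $minVND(\mathcal{Q},N)$ are the \emph{same} convex combination: weight $\mu$ on a network-distance term and weight $1-\mu$ on a visual term. Because $\mu\in[0,1]$ makes both coefficients nonnegative, I would reduce the aggregate inequality to two independent per-component inequalities and then recombine. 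It suffices to prove $Dst(\mathcal{Q}.\lambda,o.\lambda)\le minND(\mathcal{Q}.\lambda,N.b.\lambda)$ and $Sim(\mathcal{Q}.\psi,o.\psi)\le minVD(\mathcal{Q}.\psi,N.\psi)$; multiplying these by $\mu$ and $1-\mu$ respectively and summing then yields $\mathcal{F}(\mathcal{Q},o)\le\mathcal{F}(\mathcal{Q},N)$ exactly as worded.

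First I would treat the network-distance component. The structural fact supplied by the \vig, inherited from the G-Tree partition, is that $o$ lies on an edge rooted at $N$ while $\mathcal{Q}$ is scored relative to the borders of the subnetwork rooted at $N$; any shortest path realizing $\delta(\mathcal{P}_s(\mathcal{Q}.\lambda,o.\lambda))$ must pass through a border vertex $N.b$. I would invoke this border-crossing property together with the matrix component, which stores exact border-to-border and border-to-vertex shortest-path distances, to relate $\delta(\mathcal{P}_s(\mathcal{Q}.\lambda,o.\lambda))$ to $\delta(\mathcal{P}_s(\mathcal{Q}.\lambda,N.b.\lambda))$, and then normalize both sides by $Dia(\mathcal{G}(\mathcal{N},\mathcal{E},\mathcal{W},\mathcal{O}))$ as in Definition~\ref{def:igis rndp}, so that the comparison reduces to the normalized quantities $Dst$ and $minND$ appearing in the score.

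Next I would treat the visual component. Here the relevant property is that $N.\psi$ is the aggregate of the visual-word descriptors of all geo-visual objects rooted at $N$, obtained by unioning the per-object visual-word vectors stored in the visual inverted index. Writing $Sim$ as the Jaccard dissimilarity $1-\tfrac{|\mathcal{Q}.\psi\cap o.\psi|}{|\mathcal{Q}.\psi\cup o.\psi|}$ of Definition~\ref{def:igis vcs}, I would compare the per-object overlap $|\mathcal{Q}.\psi\cap o.\psi|$ against the aggregate overlap used to compute $minVD(\mathcal{Q}.\psi,N.\psi)$, exploiting the set-monotonicity of intersection and union under aggregation to obtain the required bound on $Sim$ in terms of $minVD$.

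The main obstacle I anticipate is pinning down the \emph{orientation} of these two per-component inequalities. The border-crossing argument constrains the shortest $\mathcal{Q}$-to-$o$ path to cross some border $N.b$, but whether this forces $Dst\le minND$ or the reverse depends delicately on which border is selected inside $minND$ and on the direction in which the matrix-component distances compose; the visual bound likewise hinges on precisely how $minVD$ is extracted from the aggregated $N.\psi$. I would therefore concentrate the effort on making the definitions of $N.b$ and of $minVD$ explicit enough that both scaled inequalities point the same way, so that via $minVND=\mu\,minND+(1-\mu)minVD$ they combine into $\mathcal{F}(\mathcal{Q},o)\le\mathcal{F}(\mathcal{Q},N)$. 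This definitional bookkeeping, rather than any deep estimate, is where the argument stands or falls.
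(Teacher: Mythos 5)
Your decomposition is exactly the one the paper attempts: read $\mathcal{F}(\mathcal{Q},N)$ as $minVND(\mathcal{Q},N)$ from Equation~\ref{eq:vig mindst}, prove one inequality per component, and recombine with the nonnegative weights $\mu$ and $1-\mu$. But the step you yourself flag as the obstacle --- the orientation of the per-component inequalities --- is precisely where the proposal fails, and it cannot be repaired in the direction you chose. The two structural facts actually available (any path from $\mathcal{Q}$ to an object $o$ in the subnetwork rooted at $N$ must cross some border of $N$; the node's visual word list $N.\psi$ is the union of the visual words of the objects under it) yield
\begin{equation*}
minND(\mathcal{Q}.\lambda,N.b.\lambda)\leq minND(\mathcal{Q}.\lambda,o.\lambda)
\quad\mbox{and}\quad
minVD(\mathcal{Q}.\psi,N.\psi)\leq minVD(\mathcal{Q}.\psi,o.\psi),
\end{equation*}
that is, the node-level quantities are \emph{lower} bounds on the object-level ones: forcing the path through a border can only lengthen it relative to the shortest border distance, and aggregating visual words can only increase the overlap with $\mathcal{Q}.\psi$. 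Combining these with weights $\mu$ and $1-\mu$ gives $minVND(\mathcal{Q},N)\leq\mathcal{F}(\mathcal{Q},o)$ --- the reverse of what you set out to show. The inequalities you propose to establish, $Dst(\mathcal{Q}.\lambda,o.\lambda)\leq minND(\mathcal{Q}.\lambda,N.b.\lambda)$ and $Sim(\mathcal{Q}.\psi,o.\psi)\leq minVD(\mathcal{Q}.\psi,N.\psi)$, are false in general: take $o$ deep inside the subnetwork far behind the nearest border, or an object sharing few of the words that other objects contribute to $N.\psi$.

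What this exposes is that the theorem statement itself has the inequality inverted. The text preceding it says $minVND$ ``offers a lower bound'' on the score of the objects enclosed in $N$ --- the property that makes best-first pruning correct --- and the paper's own proof derives exactly the two node-below-object inequalities displayed above, then writes the final line $\mathcal{F}(\mathcal{Q},o)\leq \mathcal{F}(\mathcal{Q},N)$ with the direction flipped, inconsistently with everything before it. So your instinct that the argument ``stands or falls'' on the orientation was correct; the resolution, however, is not finer bookkeeping about which border realizes $minND$ or how $minVD$ is extracted from $N.\psi$, but recognizing that the provable (and algorithmically meaningful) claim is $\mathcal{F}(\mathcal{Q},N)\leq\mathcal{F}(\mathcal{Q},o)$, and proving that statement instead --- which your component-plus-recombination framework does immediately once both inequalities are turned around.
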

\begin{proof}
Since  geo-visual object $o$ is enclosed by the border nodes of node $N$, the
minimum network distance between $\mathcal{Q}.\lambda$ and $ N.b.\lambda$ is no larger than the minimum network distance between $\mathcal{Q}.\lambda$ and $o.\lambda$:

\begin{equation*}
minND(\mathcal{Q}.\lambda, N.b.\lambda) \leq minND(\mathcal{Q}.\lambda, o.\lambda)
\end{equation*}
As the visual word list of the node is generated by aggregating all the unique visual word of its geo-visual objects. Hence:
\begin{equation*}
minVD(\mathcal{Q}.\psi,N.\psi) \leq minVD(\mathcal{Q}.\psi, o.\psi)
\end{equation*}
According to Equation \ref{equ:scorefunc}, Equation \ref{equ:rndistance} and Equation \ref{equ:vcsimilarity}, we obtain:
\begin{equation*}
\mathcal{F}(\mathcal{Q},o)\leq \mathcal{F}(\mathcal{Q},N)
\end{equation*}
thus completing the proof.
\end{proof}

When searching the \vig for the $k$ geo-visual objects nearest to a
query $\mathcal{Q}$, the first step is to decide which node should enter first. Metric $minVND$ offers an approximation
of the visual network distance ranking score to every children in the node and, therefore, can
be used to guide the search.

To process geo-visual queries with \vig, we
exploit the best-first traversal algorithm for retrieving
the top-k geo-visual objects. With the best-first traversal algorithm, a priority
queue is used to keep track of the nodes and objects that have yet to
be visited. When deciding which node to visit next, the algorithm picks the
node $N$ on road network with the smallest $minVND(\mathcal{Q},N)$ value in the set of all
nodes that have yet to be visited. The algorithm terminates when $k$
nearest objects (ranked according to Equation \ref{equ:scorefunc}) have been found.

\begin{algorithm}
\begin{algorithmic}[1]
\footnotesize
\caption{\bf Geo-visual Search on Road Network($\mathcal{Q}$, $k$, $\mathcal{I}$) }
\label{alg:skt search}
\INPUT $\mathcal{Q}~:$ the geo-visual query, $k~:$ the number of geo-visual image return,
   $\mathcal{I}~:$ \vig index
\OUTPUT $\mathcal{R}:$ top-$k$ query result results

\STATE $\mathcal{R}:= \emptyset$; $\mathcal{H} = \emptyset$, $\gamma_{max} = \infty$

\STATE $\mathcal{H}\leftarrow$ new a min first heap

\label{alg:skt_build fsig}
\STATE $\mathcal{H}$.Enqueue($\mathcal{I}.root, minVND(\mathcal{Q},\mathcal{I}.root)$)
\label{alg:skt push root}

\WHILE{ $\mathcal{H} \not = \emptyset$ and $\mid \mathcal{R}\mid \leq k$}
     \label{alg:skt_loop_s}
     \STATE $n \leftarrow$ the node popped from $\mathcal{H}$
     \label{alg:skt_leaf s}
     \IF{$n$ is a leaf node}
        \label{alg:skt_leaf ss}
        \FOR{ each edge $e$ in node $n$}
            \IF{$minVND(\mathcal{Q}, e) \leq \gamma_{max}$}
                \STATE $\mathcal{H}$.Enqueue($e',  minVND(\mathcal{Q}, e)$)
                \label{alg:skt_leaf e}
            \ENDIF
        \ENDFOR
     \ENDIF
     \IF{$n$ is an edge}
        \label{alg:skt_edge s}
        \FOR{ each object $o$ in edge $e$ and $o.\psi \cap \mathcal{Q}.\psi \not = \emptyset$}
            \IF{$minVND(\mathcal{Q}, o) \leq \gamma_{max}$}
                \STATE update $\mathcal{R}$ by $(o, minVND(\mathcal{Q}, o) )$
                             \label{alg:skt_edge e}
            \ENDIF
        \ENDFOR
     \ENDIF
     \IF{$n$ is an non-leaf node}
        \FOR{each child $n'$ in node $n$}
            \label{alg:skt_non-leaf s}
            \IF{$minVND(\mathcal{Q}, n') \leq \gamma_{max}$}
               \STATE $\mathcal{H}$.Enqueue($n',  minVND(\mathcal{Q}, n')$)
               \label{alg:skt_non-leaf e}
            \ENDIF
        \ENDFOR
     \ENDIF
\ENDWHILE
\label{alg:skt_loop_e}
\RETURN{$\mathcal{R}$}
\label{alg:skt_return}
\end{algorithmic}
\end{algorithm}

Algorithm~\ref{alg:skt search} illustrates the details of the \vig based geo-visual query on road network. A minimum heap $\mathcal{H}$ is employed to keep the \vig's nodes where the key of a node is its minimum visual network ranking score. For the input query, we find out the root node of current time segment, calculate the minimal spatial temporal visual ranking score for the root node, and then pushed the root node into the $\mathcal{H}$ in Line ~\ref{alg:skt push root}. Then the purposed algorithm executes the while loop (Line~\ref{alg:skt_loop_s}-\ref{alg:skt_loop_e})until the top-$k$ results are ultimately reported in Line~\ref{alg:skt_return}.


In each iteration, the top node $n$ with minimum visual network ranking score is popped from $\mathcal{H}$. When the popped node is a leaf node(Line~\ref{alg:skt_leaf s}), the enclosed edge $e$ of $n$ will be pushed to $\mathcal{H}$ if its minimal visual network distance ranking score between $e$ and $\mathcal{Q}$, denoted by $minVND(\mathcal{Q}, e)$, is not larger than $\gamma_{max}$ (Line~\ref{alg:skt_leaf ss}-~\ref{alg:skt_leaf e}). When the popped object is an edge(Line~\ref{alg:skt_edge s}), for the geo-visual objects which have same visual word with $\mathcal{Q}$ and belong to current edge, if its minimal visual network distance ranking score between geo-visual $o$ and $\mathcal{Q}$ is not larger than $\gamma_{max}$, we push $o$ into result set $\mathcal{R}$ and add update $\gamma_{max}$. Otherwise, if the popped object is a non-leaf node(Line~\ref{alg:skt_non-leaf s}), a child node $n^{'}$ of $n$ will be pushed to $\mathcal{H}$ if its minimal visual network distance ranking score between $n^{'}$ and $\mathcal{Q}$, denoted by $minVND(\mathcal{Q}, n')$, is not larger than $\lambda_{max}$ (Line~\ref{alg:skt_non-leaf s}-~\ref{alg:skt_non-leaf e}). The algorithm terminates when $\mathcal{H}$ is empty and the results are kept in $\mathcal{R}$.

\section{Moving Monitor Algorithm}
\label{mmalgorithm}

In this section, we propose a efficient algorithm to solve the problem of continuous top-$k$ geo-visual objects query on road network. At first we introduce the relevant notion named safe interval which plays a key role in this solution. A result updating rule is designed to guide the result set updating during movement of query. After that we describe the moving monitor algorithm in detail.

\subsection{Safe Interval}
If applying the solution of snapshot query in the continuous top-$k$ geo-visual object query on road network, the cost of computation is quite high due to the continuous movement of $\mathcal{Q}$. In order to solve this problem effectively, in this paper we propose a novel conception on road network named safe interval which is the base of our solution. Firstly we introduce a lemma which is the base of the notion of safe interval, and then propose the definition of safe interval.

\begin{lemma}\label{lemma:ssegr}
Given a CT$k$GVOQ $Q$ located on the edge $e_{i,j}$, the result set of $Q$ is denoted as $\mathcal{R}(Q)$ which consists of three part: (1)the set of relevant objects located on the edge $e_{i,j}$ denoted as $\mathcal{O}_{i,j}$, (2)the result set of T$k$GVOQ on the location $n_i.\lambda$ denoted as $\mathcal{R}(\mathcal{Q}.\lambda = n_i.\lambda)$ and (3) the result set of T$k$GVOQ on the location $n_j.\lambda$ denoted as $\mathcal{R}(\mathcal{Q}.\lambda = n_j.\lambda)$, that is,
\begin{equation*}
\mathcal{R}(Q) = \mathcal{O}_{i,j} \cup \mathcal{R}(\mathcal{Q}.\lambda = n_i.\lambda) \cup \mathcal{R}(\mathcal{Q}.\lambda = n_j.\lambda)
\end{equation*}
where relevant object refers to object that matches at least one of the query visual word.
\end{lemma}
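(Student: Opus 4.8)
The plan is to prove the meaningful direction of the claim, the containment
$\mathcal{R}(Q) \subseteq \mathcal{O}_{i,j} \cup \mathcal{R}(\mathcal{Q}.\lambda = n_i.\lambda) \cup \mathcal{R}(\mathcal{Q}.\lambda = n_j.\lambda)$, reading the statement as an inclusion (the union is the fixed candidate pool as the query slides along $e_{i,j}$; the reverse containment need not hold and I would flag this). First I would record the single structural fact that drives everything: the visual term $Sim(\mathcal{Q}.\psi, o.\psi)$ depends only on the query image, not on the query location, so as the query point $p$ moves along $e_{i,j}$ the only part of $\mathcal{F}(\mathcal{Q}, o)$ that changes is the network-distance term $\mu\, Dst$. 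Moreover, any object $o$ lying off the edge can only be reached from an interior point $p \in e_{i,j}$ through one of the two endpoints, so $\delta(\mathcal{P}_s(p, o.\lambda)) = \min\{\delta(p, n_i) + \delta(\mathcal{P}_s(n_i, o.\lambda)),\ \delta(p, n_j) + \delta(\mathcal{P}_s(n_j, o.\lambda))\}$.

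Next I would take an arbitrary $o \in \mathcal{R}(Q)$ with $o \notin \mathcal{O}_{i,j}$ and, without loss of generality, assume its shortest path from $p$ exits through $n_i$ (the $n_j$ case is symmetric, and a tie may be assigned to either side). The goal becomes $o \in \mathcal{R}(\mathcal{Q}.\lambda = n_i.\lambda)$. The engine is a one-step triangle/Lipschitz estimate: for every object $o'$ one has $\delta(\mathcal{P}_s(p, o'.\lambda)) \le \delta(p, n_i) + \delta(\mathcal{P}_s(n_i, o'.\lambda))$, whereas for our chosen $o$ (reached through $n_i$) this holds with equality. Writing $\mathcal{Q}_p$ and $\mathcal{Q}_{n_i}$ for the query placed at $p$ and at $n_i$, I would subtract the two expansions of $\mathcal{F}$; the location-independent visual terms cancel and I obtain the key inequality $\mathcal{F}(\mathcal{Q}_p, o') - \mathcal{F}(\mathcal{Q}_p, o) \le \mathcal{F}(\mathcal{Q}_{n_i}, o') - \mathcal{F}(\mathcal{Q}_{n_i}, o)$, valid for every competitor $o'$.

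From this inequality the conclusion follows by a contrapositive counting step: every object that scores no worse than $o$ at $n_i$ also scores no worse than $o$ at $p$. Hence if $o$ were not among the top-$k$ at $n_i$, there would be at least $k$ objects beating it at $n_i$, and those same $k$ objects would beat it at $p$, contradicting $o \in \mathcal{R}(Q)$. Therefore $o \in \mathcal{R}(\mathcal{Q}.\lambda = n_i.\lambda)$; the symmetric argument disposes of objects reached through $n_j$, and edge objects lie in $\mathcal{O}_{i,j}$ by definition, which establishes the containment.

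I expect the main obstacle to be handling the competitors $o'$ uniformly — in particular, showing the key inequality still holds when $o'$ itself lies on $e_{i,j}$ (so that $\delta(\mathcal{P}_s(p, o'.\lambda))$ may be strictly smaller than $\delta(\mathcal{P}_s(n_i, o'.\lambda))$) and when scores tie. Both cases are absorbed by the same estimate, since $\min\{\delta(p,n_i)+\delta(\mathcal{P}_s(n_i,o'.\lambda)),\ \delta(p,n_j)+\delta(\mathcal{P}_s(n_j,o'.\lambda))\} - \delta(\mathcal{P}_s(n_i,o'.\lambda)) \le \delta(p,n_i)$ irrespective of which side $o'$ sits on; ties I would resolve with a fixed tie-breaking convention so that the two counts align exactly.
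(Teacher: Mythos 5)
Your proposal is correct and follows essentially the same route as the paper's own proof: both read the stated equality as a containment, take a result object $o$ lying off the edge, route its shortest path through an endpoint (w.l.o.g.\ $n_i$), exploit the location-invariance of the visual term so that scores shift only by $\mu\,\delta(p,n_i)/Dia$, and conclude by contradiction that all $k$ members of $\mathcal{R}(\mathcal{Q}.\lambda=n_i.\lambda)$ still beat $o$ at the query point. If anything, your version is more careful than the paper's, which silently treats the competitor distance $\delta(\mathcal{Q},o')$ as exactly $\delta(\mathcal{Q},n_i)+\delta(n_i,o')$ rather than bounding it with the triangle inequality (the step your ``key inequality'' handles) and does not address ties.
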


\begin{proof}
We prove Lemma~\ref{lemma:ssegr} by contradiction. We assume that $ \exists o \in \mathcal{R}(\mathcal{Q})$ and $o \notin (\mathcal{O}_{i,j} \cup \mathcal{R}(\mathcal{Q}.\lambda = n_i.\lambda) \cup \mathcal{R}(\mathcal{Q}.\lambda = n_j.\lambda))$. The shortest path connecting $\mathcal{Q}$ and $o$ denoted as $\mathcal{P}_s(\mathcal{Q}.\lambda,o.\lambda)$ must pass through one of $n_i$ and $n_j$ since $o \notin \mathcal{O}_{i,j}$. In general, we assume that $o$ is closer to $n_i$. $\forall o' \in \mathcal{R}(\mathcal{Q}.\lambda = n_i.\lambda)$. As $o \notin \mathcal{R}(\mathcal{Q}.\lambda = n_i.\lambda)$, it is easy to know that $\mathcal{F}(\mathcal{Q}.\lambda = n_i.\lambda,o)>\mathcal{F}(\mathcal{Q}.\lambda = n_i.\lambda,o')$. Since $\delta(\mathcal{P}(\mathcal{Q},o))$ can be added to the road network distance proximity component of both sides of the above inequality and the visual content similarity component remains
constant, we have $\mathcal{F}(\mathcal{Q}, o)>\mathcal{F}(\mathcal{Q},o')$. It means that $\forall o' \in \mathcal{R}(\mathcal{Q}.\lambda=n_i.\lambda)$, $\mathcal{F}(\mathcal{Q}.\lambda=n_i.\lambda,o')>\mathcal{F}(\mathcal{Q},o)$. Therefore, $o$ should not be in $\mathcal{R}(\mathcal{Q})$. This is contradictory to the initial assumption that $o \in \mathcal{R}(\mathcal{Q})$.$\hfill\blacksquare$
\end{proof}

\begin{definition}[\textbf{Safe Segment}] \label{def:igis safeseg}
Given a road network $\mathcal{G}(\mathcal{N},\mathcal{E},\mathcal{W})$ and a CT$k$GVQ denoted as $\mathcal{Q}$ which is moving on an edge $e_{i,j} \in \mathcal{E}$. A safe segment of $e_{i,j}$ is defined as $\mathcal{L}(e_{i,j})$, which is a subsegment of $e_{i,j}$ and if $\mathcal{Q}$ is located on it, the results is not change. Formally, $\forall p_\alpha,p_\beta $ $\in$ $\mathcal{L}(e_{i,j})$, $\mathcal{R}(\mathcal{Q}(\lambda=p_\alpha)) \equiv \mathcal{R}(\mathcal{Q}(\lambda=p_\beta))$, where $p_\alpha$ and $p_\beta$ are two location and  $\mathcal{R}(\mathcal{Q}(\lambda=p_\alpha))$ denotes the result set of query $\mathcal{Q}$ at the location $p_\alpha$.
\end{definition}

Based on safe segment, a client-server architecture is utilized in this paper. When a client submits a continuous top-$k$ geo-visual object query $\mathcal{Q}$, the edge on which this query $\mathcal{Q}$ locates can be searched out and the result sets of query at the two end nodes of this edge are returned at first. After that the top-$k$ results of $\mathcal{Q}$ can be generated from the result sets of the end nodes. It is only when the client leaves out from the safe segment, It is need to send a new location to the server and then repeats the above process.

According the definition above-mentioned, we can know that each edge has its safe segment, in which the result set of a query do not need to be updated. This can greatly reduce computing cost. Moreover, we find a situation in a road network that if a long path containing serval segment, and it has no crossroad, we can consider these safe segments as a whole one in which the result set still do not need to be updated. There is no doubt that it can further cut down the cost when we utilize this conception in  CT$k$GVQ. We call it as safe interval and give the definition as follows.

\begin{definition}[\textbf{Safe Interval}] \label{def:igis safeintv}
Given a road network $\mathcal{G}(\mathcal{N},\mathcal{E},\mathcal{W})$, we define safe interval as follows: assume that $n_i,n_j \in \mathcal{N}$ and the path connecting them is denoted as $\mathcal{P}(n_1,n_j) = \{e^{(1)},e^{(2)},...,e^{(K)}\}$ which has no branch road, the safe segments of each edge in $\mathcal{P}(n_1,n_j)$ are denoted as $\mathcal{L}(e^{(1)}),\mathcal{L}(e^{(2)}),...,\mathcal{L}(e^{(K)})$, the safe interval of this path is defined as $\mathcal{S} = \{\mathcal{L}(e^{(1)}),\mathcal{L}(e^{(2)}),...,\mathcal{L}(e^{(K)})\}$.
\end{definition}

\begin{theorem}[\textbf{Results Updating Rule}]
Given a CT$k$GVQ denoted as $Q$ located on the edge $e_{i,j}$ moves towards the direction from $n_i$ to $n_j$, the result set will be updated under the following rule: (1)the geo-visual objects $\hat{o} \in \mathcal{R(\mathcal{Q})}$ which are located on the position behind $\mathcal{Q}$ will be discarded. (2)the geo-visual objects $\bar{o} \notin \mathcal{R}(\mathcal{Q})$ which are located on the position in the front of $\mathcal{Q}$ will be added in. Formally, assume that $\mathcal{Q}$ pass through $p_\alpha$ and $p_\beta$ successively, $p_\alpha,p_\beta \in e_{i,j}$.

(1)If $\delta(\hat{o}.\lambda,p_\alpha) < \delta(\hat{o}.\lambda,p_\beta)$, then
\begin{equation*}
\mathcal{R}(\mathcal{Q}(\lambda=p_\beta)) = \mathcal{R}(\mathcal{Q}(\lambda=p_\alpha)) \setminus \{\hat{o}\}.
\end{equation*}

(2)If $\delta(\bar{o}.\lambda,p_\alpha) > \delta(\bar{o}.\lambda,p_\beta)$, then
\begin{equation*}
\mathcal{R}(\mathcal{Q}(\lambda=p_\beta)) = \mathcal{R}(\mathcal{Q}(\lambda=p_\alpha)) \cup \{\hat{o}\}.
\end{equation*}
\end{theorem}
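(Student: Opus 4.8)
The plan is to exploit the fact that the score function $\mathcal{F}(\mathcal{Q},o)$ splits into a location-dependent distance term $\mu\,Dst(\mathcal{Q}.\lambda,o.\lambda)$ and a location-independent visual term $(1-\mu)\,Sim(\mathcal{Q}.\psi,o.\psi)$. As $\mathcal{Q}$ slides from $p_\alpha$ to $p_\beta$ along $e_{i,j}$ (from $n_i$ toward $n_j$), the visual term of every object is frozen by Definition~\ref{def:igis vcs}, so every change of rank is driven purely by the monotone change of $Dst$. I would first record two elementary facts: moving a distance $d=\delta(p_\alpha,p_\beta)$ toward $n_j$ increases $\delta(\mathcal{Q}.\lambda,n_i)$ by $d$ and decreases $\delta(\mathcal{Q}.\lambda,n_j)$ by $d$; and, since $\mathcal{Q}$ is interior to the edge, the shortest path to any object $o$ must leave $e_{i,j}$ through $n_i$ or $n_j$ (or run along $e_{i,j}$ itself when $o\in\mathcal{O}_{i,j}$), so that $\delta(\mathcal{P}_s(\mathcal{Q}.\lambda,o.\lambda))=\min\{\delta(\mathcal{Q}.\lambda,n_i)+\delta(n_i,o),\,\delta(\mathcal{Q}.\lambda,n_j)+\delta(n_j,o)\}$. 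Together these pin down how $Dst(\mathcal{Q}.\lambda,o.\lambda)$ evolves.

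For case (1) the hypothesis $\delta(\hat o.\lambda,p_\alpha)<\delta(\hat o.\lambda,p_\beta)$ says the query-to-$\hat o$ distance strictly grows, so $\hat o$ is reached through $n_i$ (it lies behind the motion) and, by the formula above, its distance increases by exactly $d$. Hence $\mathcal{F}(\mathcal{Q}(\lambda=p_\beta),\hat o)=\mathcal{F}(\mathcal{Q}(\lambda=p_\alpha),\hat o)+\mu d/Dia(\mathcal{G})$, i.e. its score strictly worsens. Symmetrically, in case (2) the hypothesis $\delta(\bar o.\lambda,p_\alpha)>\delta(\bar o.\lambda,p_\beta)$ forces $\bar o$ to be reached through $n_j$ (ahead of the motion), so its distance decreases by $d$ and its score strictly improves by the same amount $\mu d/Dia(\mathcal{G})$, both via Definition~\ref{def:igis rndp}. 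The crucial observation is that the shift is \emph{uniform}: every behind object worsens by the identical quantity $+\mu d/Dia(\mathcal{G})$ and every ahead object improves by $-\mu d/Dia(\mathcal{G})$. Consequently the ranking among behind objects is preserved, the ranking among ahead objects is preserved, and every ahead object gains $2\mu d/Dia(\mathcal{G})$ of advantage over every behind object.

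With this uniform-shift picture the update rule follows by bookkeeping at the top-$k$ cutoff. If a behind object $b$ was outside $\mathcal{R}(\mathcal{Q}(\lambda=p_\alpha))$, then each of the $\geq k$ objects beating it at $p_\alpha$ still beats it at $p_\beta$ (behind competitors keep their gap, ahead competitors widen it), so $b$ cannot enter; dually, if an ahead object $a$ was in $\mathcal{R}(\mathcal{Q}(\lambda=p_\alpha))$, the set of objects beating $a$ can only shrink, so $a$ cannot leave. Thus the only admissible changes are behind objects crossing below the cutoff (discarded, case (1)) and ahead objects crossing above it (inserted, case (2)). I would invoke Lemma~\ref{lemma:ssegr} to confine the candidate pool to $\mathcal{O}_{i,j}\cup\mathcal{R}(\mathcal{Q}.\lambda=n_i.\lambda)\cup\mathcal{R}(\mathcal{Q}.\lambda=n_j.\lambda)$, so no external object can interfere. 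Taking $p_\alpha$ and $p_\beta$ as consecutive event positions (adjacent safe segments in the sense of Definition~\ref{def:igis safeseg}), exactly one ahead object overtakes one behind object at the cutoff; rules (1) and (2) then record the two halves of this single swap, the removal of $\hat o$ and the insertion of $\bar o$, jointly yielding $\mathcal{R}(\mathcal{Q}(\lambda=p_\beta))=(\mathcal{R}(\mathcal{Q}(\lambda=p_\alpha))\setminus\{\hat o\})\cup\{\bar o\}$.

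The step I expect to be the main obstacle is justifying the uniformity of the shift, i.e. that no object switches the exit node ($n_i$ versus $n_j$) of its shortest path to the query while $\mathcal{Q}$ travels from $p_\alpha$ to $p_\beta$. I would settle this by noting that $\delta(\mathcal{Q}.\lambda,n_i)$ is increasing and $\delta(\mathcal{Q}.\lambda,n_j)$ is decreasing along the edge, so for a fixed object the minimizing branch can change at most once as $\mathcal{Q}$ traverses $e_{i,j}$; restricting the move to an interval between consecutive event positions guarantees no such flip in the open interval, which is exactly the regime in which the discard/insert rule is meant to operate. The remaining ingredients—the sign of each distance change and the constancy of $Sim$—are immediate from Definition~\ref{def:igis rndp} and Definition~\ref{def:igis vcs}.
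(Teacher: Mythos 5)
First, a point of comparison that matters here: the paper contains \emph{no proof} of this theorem at all. The statement is followed immediately by an illustrative example, and that example itself hedges, saying objects ``may'' be discarded from or added to the result set. So your attempt cannot be measured against a paper proof; it has to stand entirely on its own.

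On its own merits, your uniform-shift skeleton (frozen visual term, distance change of $\pm d$ through $n_i$ or $n_j$, rank preservation within the ``behind'' group and within the ``ahead'' group) is the right way to attack this, but the argument has a genuine gap exactly where the theorem needs support. Conclusion (1) asserts that a behind object $\hat{o}$ \emph{is} discarded; yet a worsening score alone never forces eviction from a top-$k$ set --- eviction happens only if some other object actually overtakes $\hat{o}$ at the cutoff. You close this hole by fiat: you assume $p_\alpha,p_\beta$ are ``consecutive event positions'' at which ``exactly one ahead object overtakes one behind object.'' Neither assumption appears in the theorem (which allows arbitrary $p_\alpha,p_\beta \in e_{i,j}$), and you give no argument that crossings are simple, tie-free, or that the crossing pair is precisely $(\hat{o},\bar{o})$. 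Relatedly, each case of the theorem as literally stated changes the cardinality of $\mathcal{R}$, which is impossible for a result set of fixed size $k$; your final displayed identity is the swap $(\mathcal{R}(\mathcal{Q}(\lambda=p_\alpha))\setminus\{\hat{o}\})\cup\{\bar{o}\}$, i.e., you end up proving a repaired statement rather than the stated one. Finally, your claim that the hypothesis $\delta(\hat{o}.\lambda,p_\alpha)<\delta(\hat{o}.\lambda,p_\beta)$ forces the distance to increase by \emph{exactly} $d$ requires that the minimizing exit node ($n_i$ versus $n_j$) not flip between the two positions; you acknowledge this obstacle but discharge it only by again restricting to the event-interval regime. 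Your reading (a single swap at a crossing event) is very likely what the authors intended --- their own example's ``may be discarded/added'' phrasing concedes the conclusions do not hold unconditionally --- but as a proof of the theorem as written, the argument is incomplete, and it would be better presented as a corrected restatement followed by a proof of that restatement.
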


\begin{exmp}
As shown in Figure.~\ref{fig:fig4}, the query $\mathcal{Q}$ denoted as red triangle located on the edge $e_{1,2}$. Three geo-visual objects denoted by yellow small circle are located on edge $e_{4,1}$, $e_{1,2}$ and $e_{2,3}$ respectively. $\mathcal{Q}$ is moving from location $p_1$ to $p_2$. It is easy to find that $\delta(o_1.\lambda,p_1) < \delta(o_1.\lambda,p_2)$, thus it may will be discarded from the result set. On the other hand, $\delta(o_2.\lambda,p_1)>\delta(o_2.\lambda,p_2)$ and $\delta(o_3.\lambda,p_1)>\delta(o_3.\lambda,p_2)$, that means $o_2$ and $o_3$ may will be added in the result set.
\end{exmp}

\begin{figure}[thb]
\newskip\subfigtoppskip \subfigtopskip = -0.1cm
\centering
\includegraphics[width=0.8\linewidth]{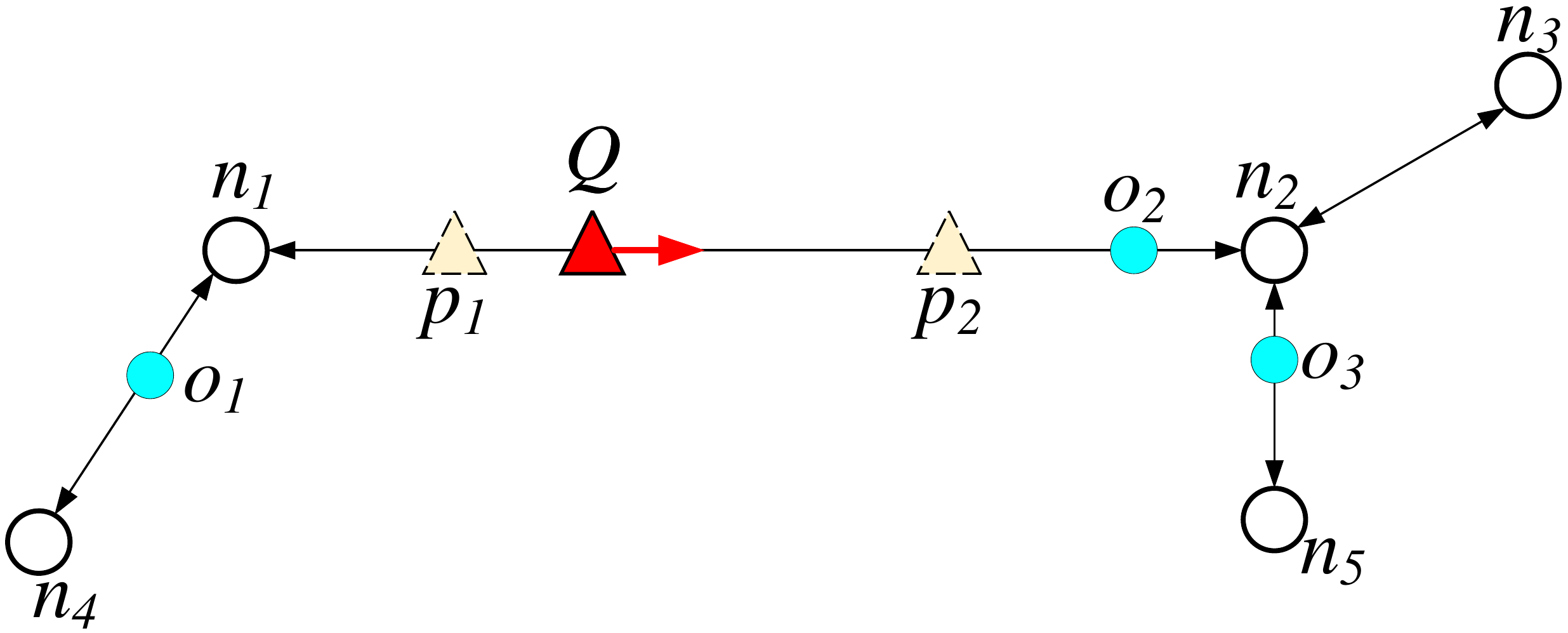}
\vspace{-1mm}
\caption{\small  An example of result updating rule }
\label{fig:fig4}
\end{figure}

\subsection{Moving Monitor Algorithm}
A standard client-server architecture is utilized to imply a continuous top-$k$ geo-visual object query. During the processing, there are two cases must be considered, i.e., (1)\textbf{CASE 1}. The client moves into a safe interval; (2)\textbf{CASE 2}. The client leaves out from a safe interval and moves into a new edge. In order to understand our approach easily, before introducing the query algorithm, we at first present these two case in detail as follows.

\textbf{CASE 1}. When the client moves into a safe interval, first the server returns the candidates set according to Lemma~\ref{lemma:ssegr}. Then in the processing of moving, the client will generate the results set based on candidates set and continuously update it according the result updating rule described above.

\textbf{CASE 2}. When the client leaves out from a safe interval and moves into a new edge, it promptly informs the server to recompute the candidates set for the new edge. After that the client will move into a new safe interval of this edge, and then the results set will be calculated as CASE 1.

Based on the conception of safe segment and the results updating rule, we design an algorithm named \textbf{Moving Monitor Algorithm} (MMA for short) which runs on the client-server architecture. The pseudo-code is shown as follows.

\begin{algorithm}
\begin{algorithmic}[1]
\footnotesize
\caption{\bf Moving Monitor Algorithm (MMA)}
\label{alg:qca}

\INPUT  A road network with geo-visual objects $\mathcal{G}(\mathcal{N},\mathcal{E},\mathcal{W},\mathcal{O})$, a moving query $\mathcal{Q}$.
\OUTPUT A results set $\mathcal{R}$.

\STATE Initializing: $\mathcal{R} \leftarrow \emptyset$; //A results set
\STATE Initializing: $\mathcal{C} \leftarrow \emptyset$; //A candidates set
\STATE Initializing: $\mathcal{S} \leftarrow \emptyset$; //A safe interval
\STATE Initializing: $\zeta \leftarrow \emptyset$; //A variable to save the current location of $\mathcal{Q}$

\FOR{each $\mathcal{Q}.\lambda$}
    \STATE $\mathcal{S} \leftarrow SafeIntervalSearch(\mathcal{Q}.\lambda)$; //Locating the safe interval in which $\mathcal{Q}$ lies on
    \STATE $\mathcal{C} \leftarrow CandidateSearch(\mathcal{S})$;
    \WHILE{$GetLocation(\zeta,\mathcal{Q})$}
        \IF{$\zeta \in \mathcal{S}$}
            \STATE $\mathcal{R} \leftarrow SnapshortTopKQuery(\mathcal{\mathcal{Q}.\lambda},\mathcal{C})$;
            \STATE $SendToClient(\mathcal{R})$; //Send top-$k$ results to client
            \STATE Break;
        \ENDIF
        \IF{$\zeta \notin \mathcal{S}$}
            \STATE $\mathcal{S} \leftarrow SafeIntervalSearch(\mathcal{Q}.\lambda)$; //Locating the safe interval in which $\mathcal{Q}$ is lie on
            \STATE $\mathcal{C} \leftarrow CandidateSearch(\mathcal{S})$;
        \ENDIF
    \ENDWHILE
\ENDFOR

\end{algorithmic}
\end{algorithm}

\section{Experimental Evaluation}
\label{exp}

In this section, we present results of a comprehensive performance study on real
road network datasets to evaluate the efficiency and scalability of the proposed techniques.
Specifically, we evaluate the effectiveness of the following indexing techniques for continuous geo-visual search on road network.
\begin{itemize}
\item{\textbf{Overlay}} Overlay is the road network similarity search technique proposed in ~\cite{DBLP:conf/edbt/Rocha-JuniorN12}.
\item{\textbf{Overlay-SI}} Overlay-SI is a natural extension of the road network similarity search technique in ~\cite{DBLP:conf/edbt/Rocha-JuniorN12} by the safe interval technique, which is proposed in Section~\ref{hybridIndex}.
\item{\textbf{VIG-SI}} VIG-SI is a combination of \vig technique and safe interval technique, which are proposed in Section~\ref{hybridIndex}.
\end{itemize}

\noindent \textbf{Datasets.} Performance of various algorithms is evaluated on both real road network datasets.
The following three datasets are deployed in the experiments.
Road network \textbf{NA} is obtained from the North America Road Network(\url{http://www.cs.utah.edu/~lifeifei/SpatialDataset.htm}) with $175,812$ nodes and $179,178$ road segments; Road network of \textbf{SF} is obtained from San Francisco Road Network (\url{http://www.cs.utah.edu/~lifeifei/SpatialDataset.htm}) where there are $174,955$  nodes and $223,000$ edges; Road network \textbf{AU} is obtained from ~\cite{DBLP:conf/edbt/Rocha-JuniorN12} with $1,181,142$  nodes and $1,631,421$ edges. The locations of the objects are randomly chosen from spatial datasets Rtree-Portal (
\url{http://www.rtreeportal.org}). Note that we move an object to its closest road segment if it does not lie on any edge in the road network.
In the experiments, the locations of all datasets are scaled to the 2-dimensional
space $[0,~10000]^2$. Similarly, we generate image dataset by crawling millions image from photo-sharing site Flickr(\url{http://www.flickr.com/}). The dataset size varies from 400K to 2M to evaluate the scalability of our proposed algorithm. Table~\ref{tab:image} summaries the important statistics of these image datasets.

\begin{table}
\centering
\caption{Information of datasets} \label{tab:image}

\centering
\begin{tabular}{cccc}

\hline
Datasets& Number of Images& Dist. Visual Words Number& Avg. Visual Words Number\\
\hline

400K&	400000&	    613940&  	118.2\\
800K&	800000&	    607401& 	128.6\\
1.2M&	    1200000&	616947& 	125.9\\
1.6M&	    1600000&	614026& 	127.4\\
2M&	2000000&   	618347&  	124.5\\
\hline
\end{tabular}

\end{table}

\vspace{.5mm}
\noindent
{\bf Workload.}
A workload for the continuous geo-visual query consists of $100$ queries. The  query response time, the  number of communication time are employed to evaluate the performance of the algorithms.
The query locations are randomly selected from the locations of the underlying objects, and each query contains a sequence of locations in the form
of ($n_i$, $n_j$, $\lambda$). As both \textbf{Overlay-SI} and \textbf{VIG-SI} adopt segment interval technique, we only take \textbf{Overlay} and \textbf{VIG-SI} into comparison, while comparing the number of communication time. The query length, which indicates the
number of locations the moving object reported, varies from 100 to 500; the number of the query visual words changes from 20 to 100; the number of the returned results grows from 10 to 50; the preference parameter $\mu$ varies from 0.1 to 0.9; the image dataset size grows from 400K to 2M. By default, the query length, the query visual word number, the result number, he preference parameter and dataset size is set to \textbf{100}, \textbf{40}, \textbf{10}, \textbf{0.5} and \textbf{800K} respectively. Experiments are run on a PC with Intel Xeon 2.60GHz dual CPU and 16G memory running Ubuntu. All algorithms in the experiments are implemented in Java.

\noindent \textbf{Evaluation on query length.}
We investigate the query response time of three algorithm use the query length of 100, 200, 300, 400, 500. Figure~\ref{fig:query-length}(a) reports that the performance of three algorithm in terms of the query response
time degrades against
the growth of query length. As expected, \textbf{Overlay-SI} significantly outperforms
\textbf{Overlay} in Figure ~\ref{fig:query-length}(a), since the safe interval can reduce the number of geo-visual search in server. \textbf{VIG-SI} achieves the better performance
compared with \textbf{Overlay-SI} because the G-tree can significantly reduce the network expansion cost. Similarly, in Figure~\ref{fig:query-length}(b), the communication time increasing as the growth of query length, because more query locations mean higher probability to invoke geo-visual search.

\begin{figure*}
\newskip\subfigtoppskip \subfigtopskip = -0.1cm
\begin{minipage}[b]{1\linewidth}
\begin{center}
     \subfigure[Time]{
     \includegraphics[width=0.48\linewidth]{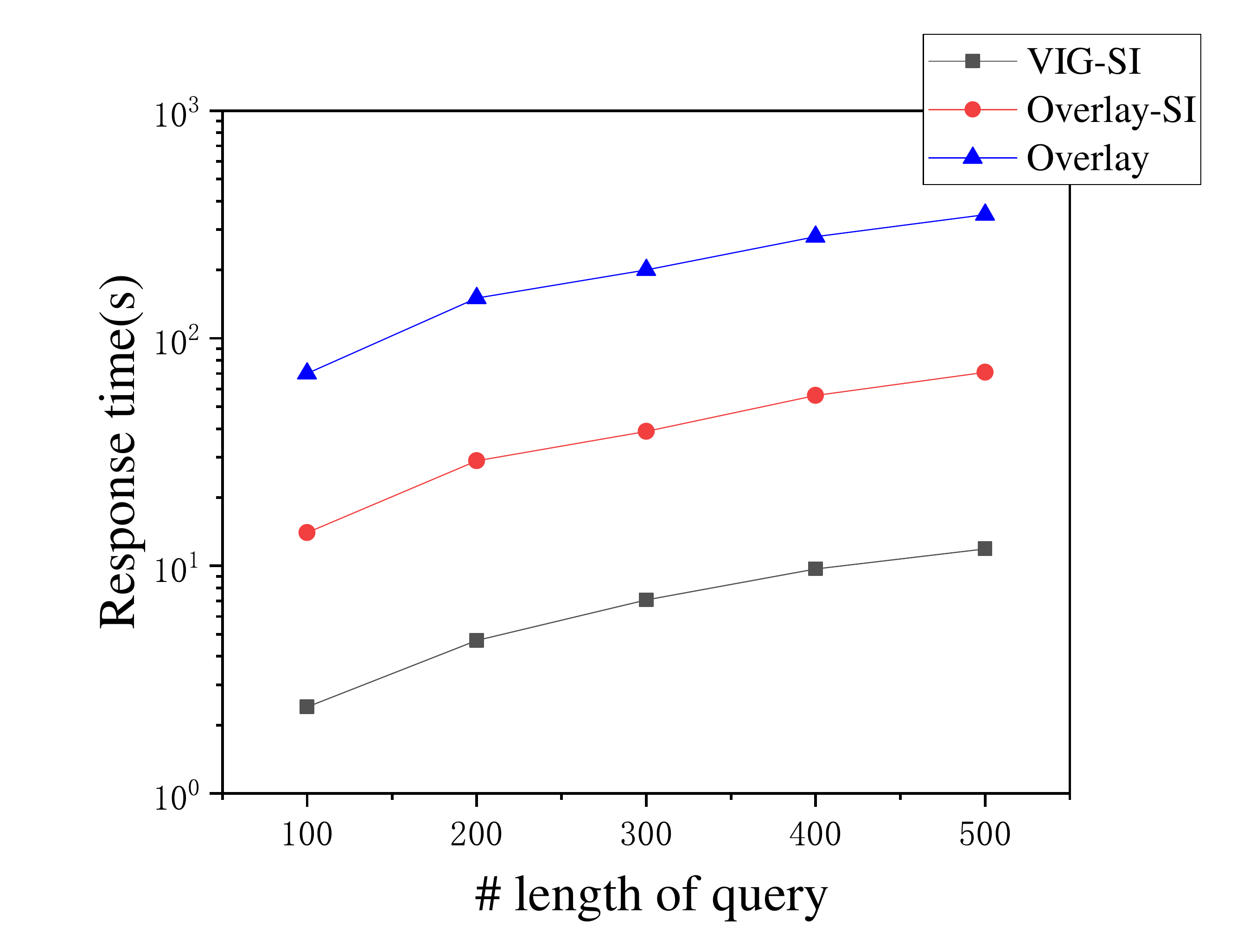}
     }
     \subfigure[Communication Time]{
     \includegraphics[width=0.48\linewidth]{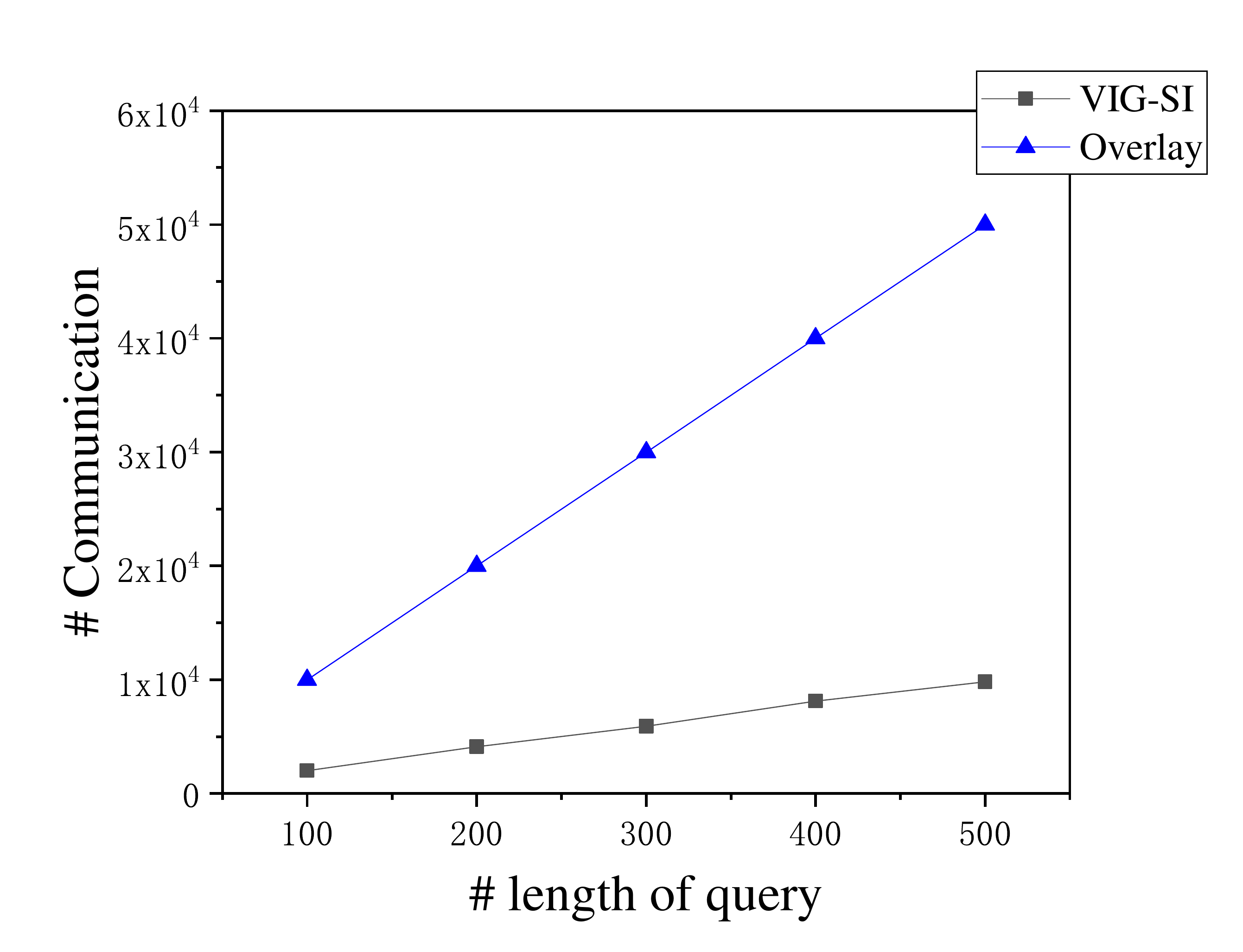}
     }
   \captionsetup{justification=centering}
       \vspace{-0.2cm}
\caption{Effect of query length}
\label{fig:query-length}
\end{center}
\end{minipage}
\end{figure*}

\noindent \textbf{Evaluation on the number of query visual words.}
We evaluate
the effect of the number of query visual words $l$ in Figure ~\ref{fig:query-visual} on
dataset SF, where $l$ varies from 20 to 100. The larger l will lead to the more
chance for each object to meet the keyword constraint, it indicates more nodes or  objects need to take into consideration. Thus, the response time of both three algorithms increases. However, the communication cost of both \textbf{Overlay} and \textbf{VIG-SI} is stable and \textbf{VIG-SI} has less communication. As \textbf{VIG-SI} adopts safe interval technique, the communication time of \textbf{VIG-SI} significantly reduces. Meanwhile, because the \textbf{Overlay} executes geo-visual search in each query location, and the probability to invoke geo-visual search for safe interval will not change as the query visual words increase.

\begin{figure*}
\newskip\subfigtoppskip \subfigtopskip = -0.1cm
\begin{minipage}[b]{1\linewidth}
\begin{center}
     \subfigure[Time]{
     \includegraphics[width=0.48\linewidth]{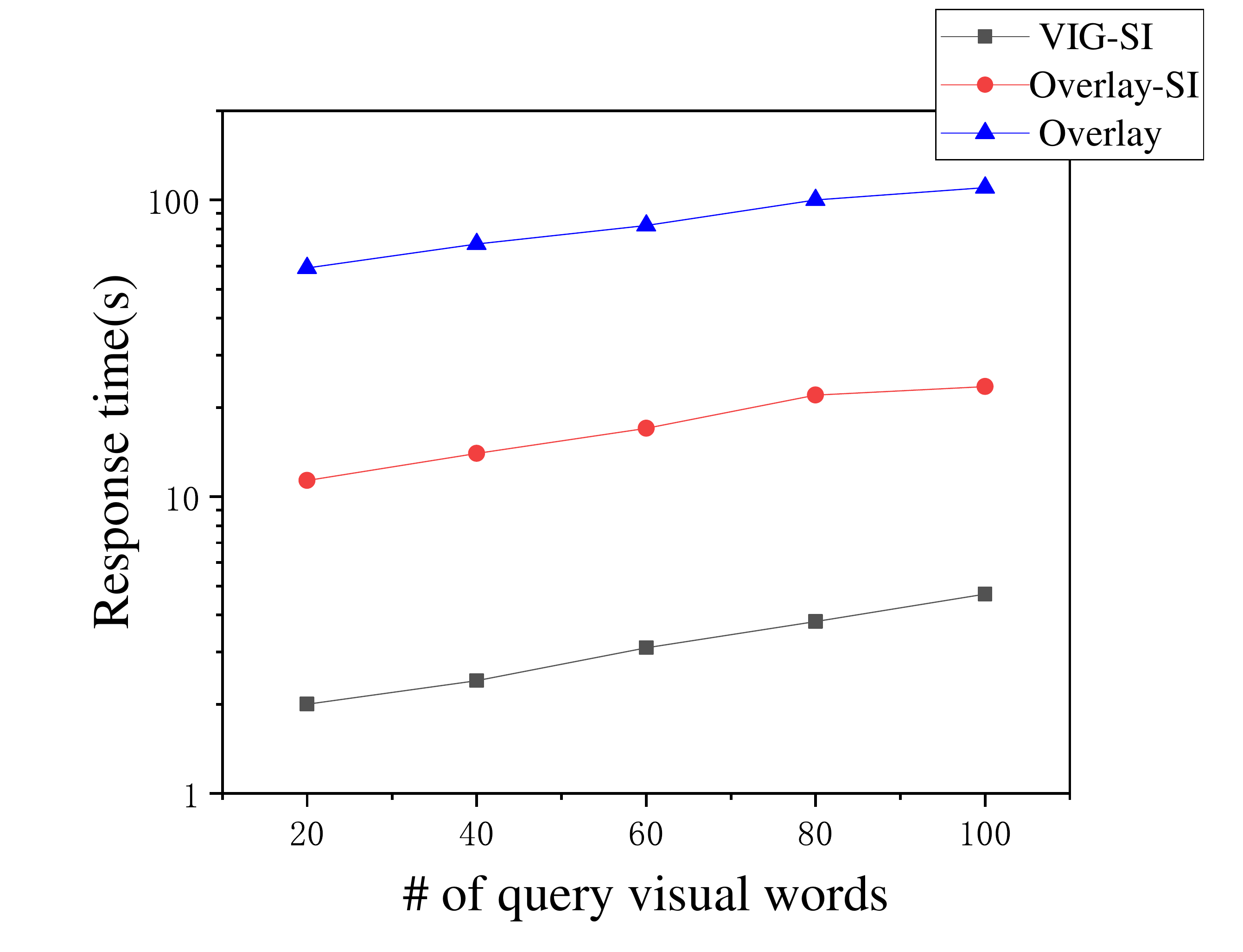}
     }
     \subfigure[Communication Time]{
     \includegraphics[width=0.48\linewidth]{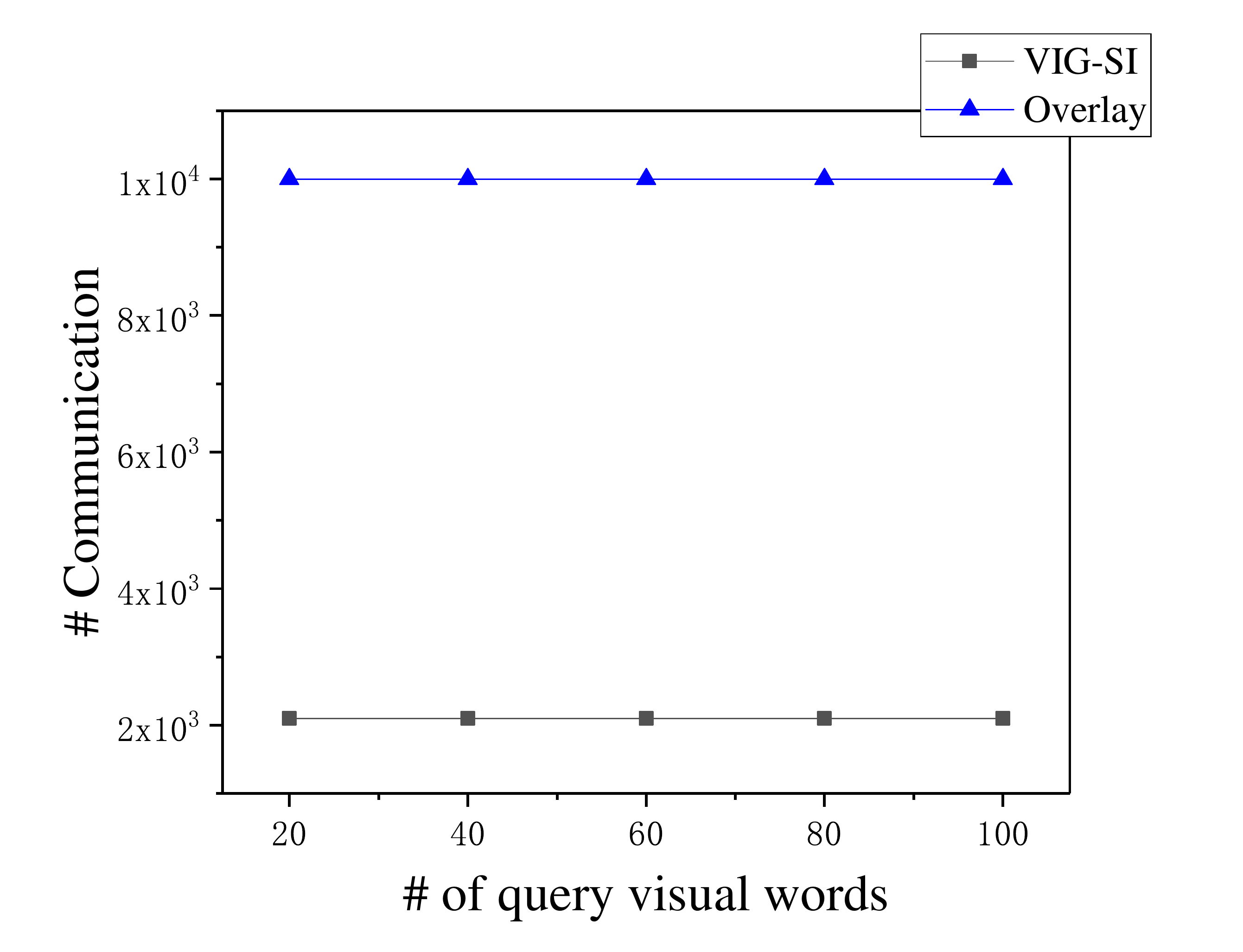}
     }
   \captionsetup{justification=centering}
       \vspace{-0.2cm}
\caption{Effect of query visual words}
\label{fig:query-visual}
\end{center}
\end{minipage}
\end{figure*}

\noindent \textbf{Evaluation on the number of results.}
Figure ~\ref{fig:query-result}
reports the performance of algorithms when the number of result $k$ varies from
10 to 50. Not surprisingly, all of them increase gradually with the
increase of $k$ and the performance of \textbf{VIG-SI} is always the best. Similar to the above situation, the communication cost of both \textbf{Overlay} and \textbf{VIG-SI} is still stable, and \textbf{VIG-SI} has better performance.

\begin{figure*}
\newskip\subfigtoppskip \subfigtopskip = -0.1cm
\begin{minipage}[b]{1\linewidth}
\begin{center}
     \subfigure[Time]{
     \includegraphics[width=0.48\linewidth]{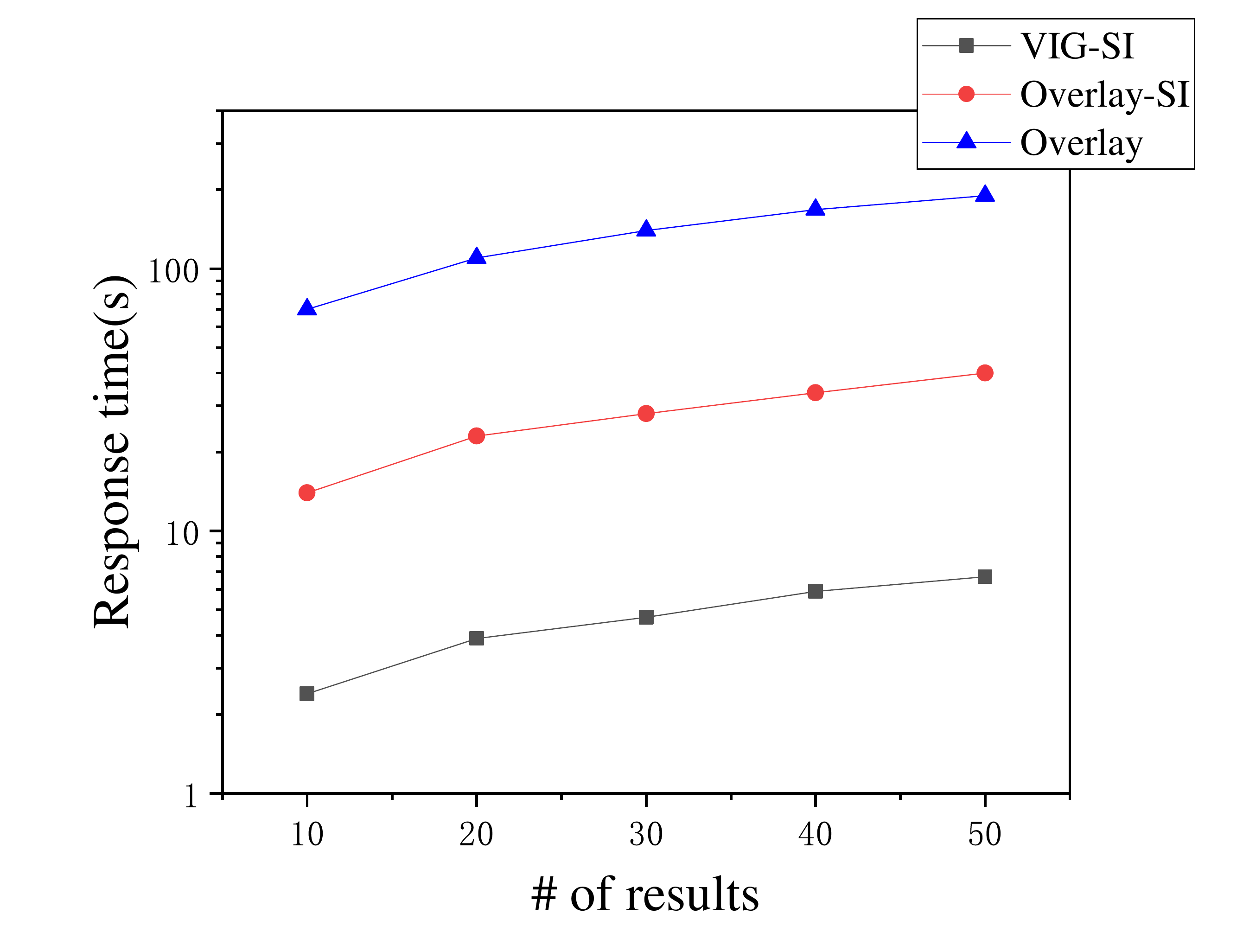}
     }
     \subfigure[Communication Time]{
     \includegraphics[width=0.48\linewidth]{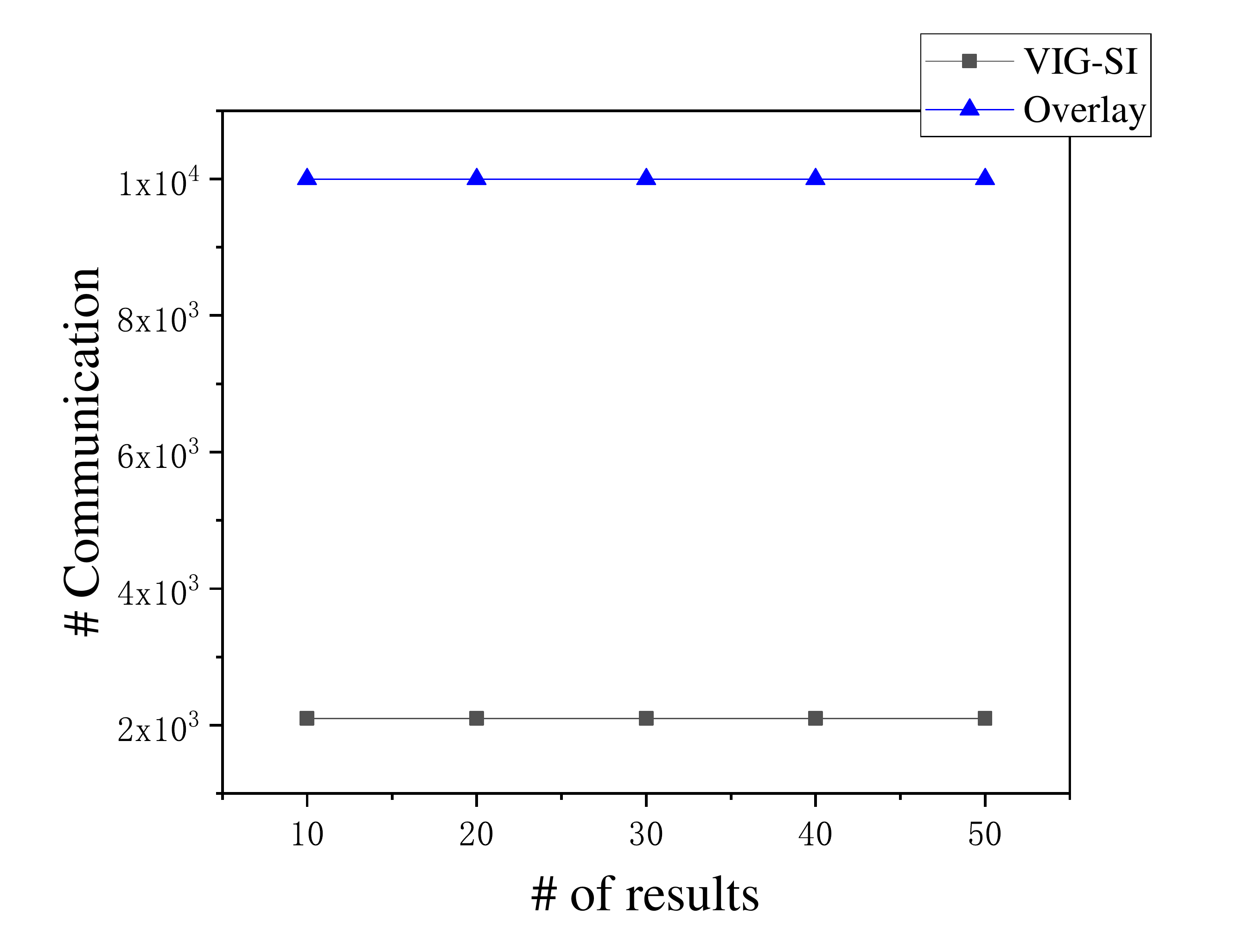}
     }
   \captionsetup{justification=centering}
       \vspace{-0.2cm}
\caption{Effect of results}
\label{fig:query-result}
\end{center}
\end{minipage}
\end{figure*}

\noindent \textbf{Evaluation on the preference parameter.}
The effect of query preference parameter $\mu$ is shown in Figure ~\ref{fig:preference-parameter}. A large value of $\mu$ indicates high preference to network proximity, while a small value of $\mu$ means high preference to visual similarity. As can be seen in Figure ~\ref{fig:preference-parameter}(a), all algorithm are sensitive to $\mu$, and their performance becomes better as the increase of $\mu$.
Figure ~\ref{fig:preference-parameter}(b) shows that the communication time is insensitive to $\mu$.

\begin{figure*}
\newskip\subfigtoppskip \subfigtopskip = -0.1cm
\begin{minipage}[b]{1\linewidth}
\begin{center}
     \subfigure[Time]{
     \includegraphics[width=0.48\linewidth]{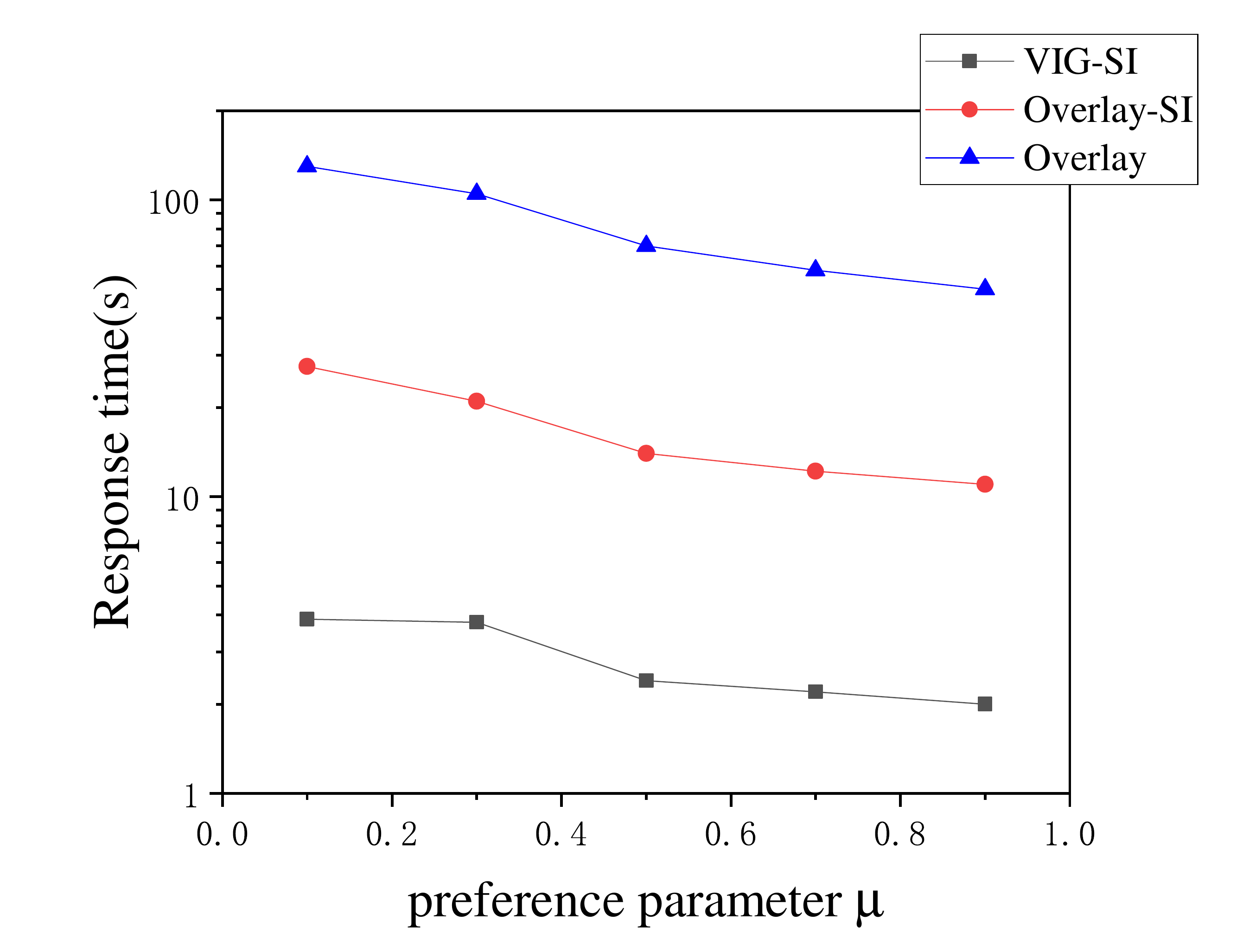}
     }
     \subfigure[Communication Time]{
     \includegraphics[width=0.48\linewidth]{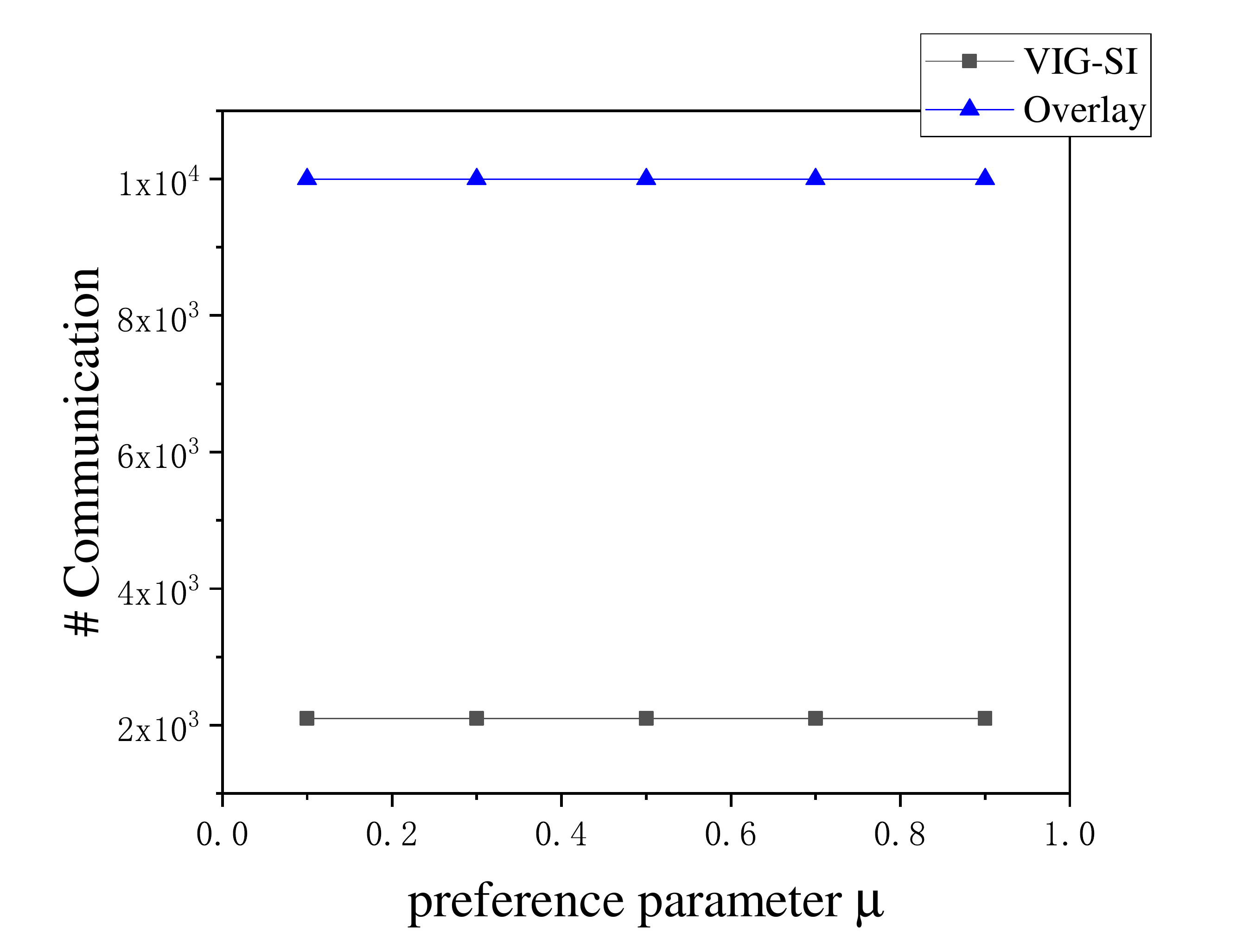}
     }
   \captionsetup{justification=centering}
       \vspace{-0.2cm}
\caption{Effect of preference parameter}
\label{fig:preference-parameter}
\end{center}
\end{minipage}
\end{figure*}

\noindent \textbf{Evaluation on the dataset size.}
Figure ~\ref{fig:dataset-size}(a) illustrates
the query response time of the under different sizes of image datasets. It is shown that all algorithms are sensitive to the growth of dataset size. As expected, Figure ~\ref{fig:dataset-size}(b) shows that the number of communication time of both algorithms are still stable when dataset size varies from 400K to 2M.

\begin{figure*}
\newskip\subfigtoppskip \subfigtopskip = -0.1cm
\begin{minipage}[b]{1\linewidth}
\begin{center}
     \subfigure[Time]{
     \includegraphics[width=0.48\linewidth]{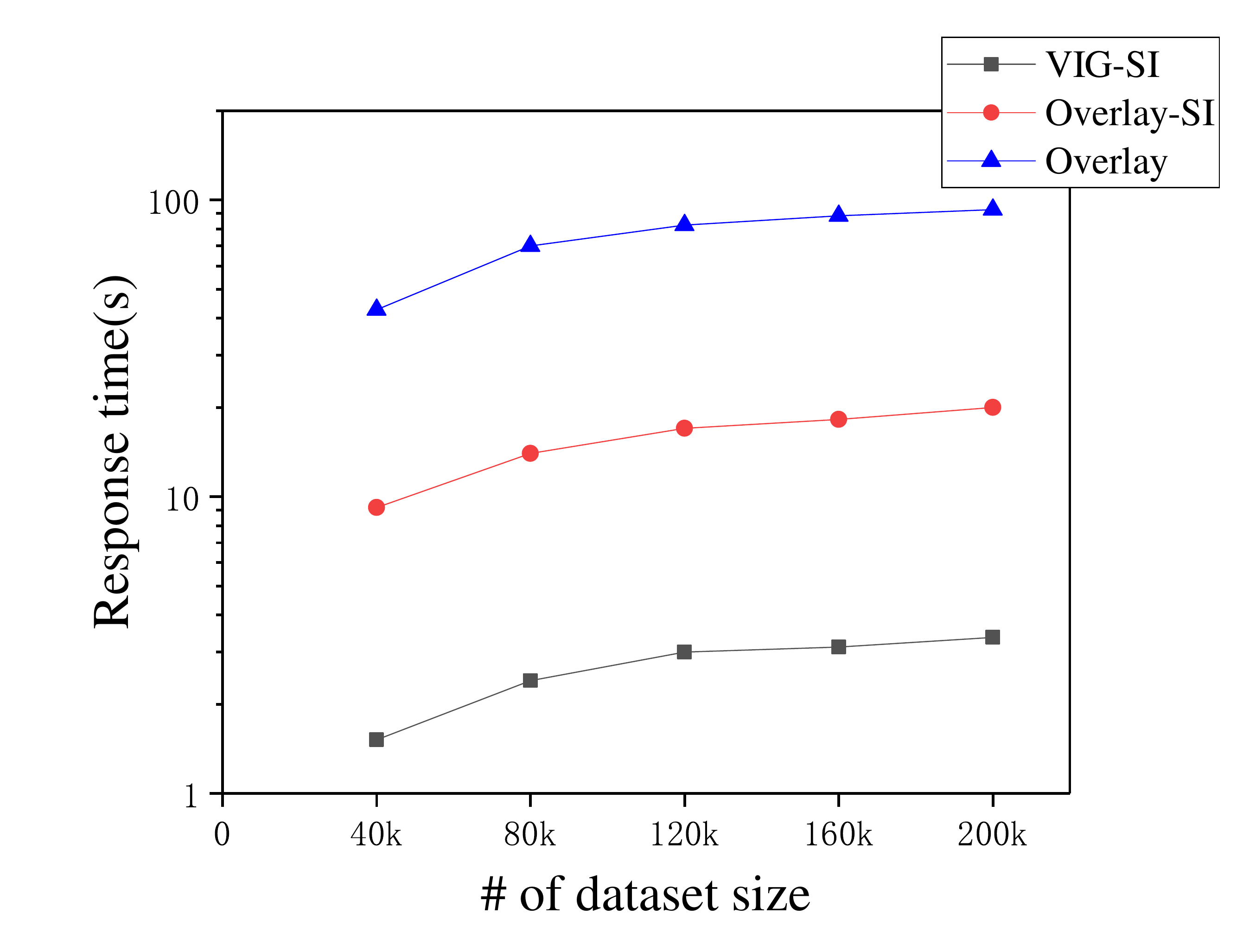}
     }
     \subfigure[Communication Time]{
     \includegraphics[width=0.48\linewidth]{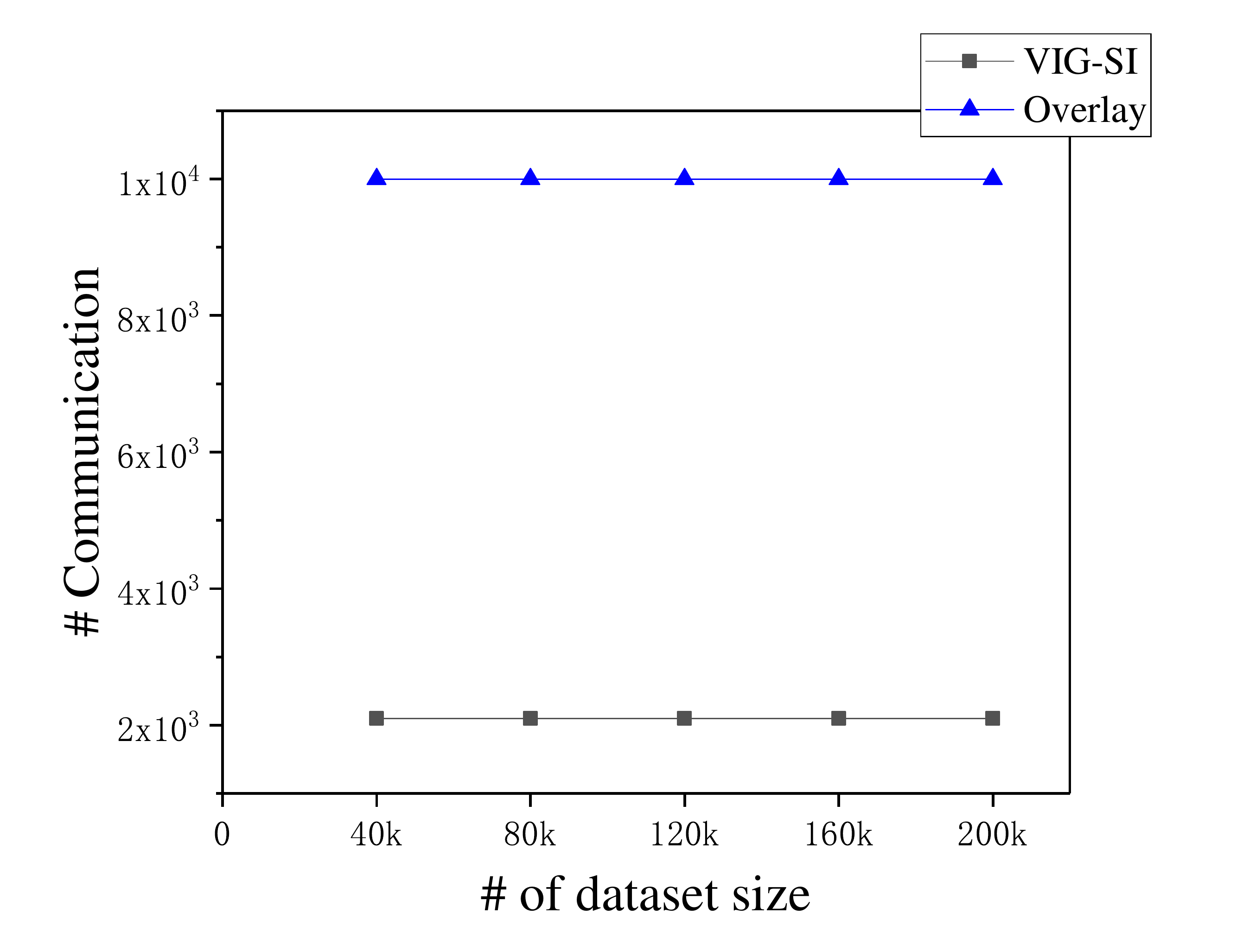}
     }
   \captionsetup{justification=centering}
       \vspace{-0.2cm}
\caption{Effect of dataset size}
\label{fig:dataset-size}
\end{center}
\end{minipage}
\end{figure*}

\noindent \textbf{Evaluation on different dataset.}
We evaluate the response time and the number of communication time for three real road network datasets of different size in Figure ~\ref{fig:different-dataset}. As can be seen in Figure ~\ref{fig:different-dataset}(a), all algorithms achieve the best performance in dataset NA, and have the worst performance in dataset AU. This is because more edges should be accessed as the edges size increases. The algorithm \textbf{Overlay} has better performance in dataset AU, while the algorithm \textbf{Overlay-SI} and \textbf{VIG-SI} achieve better performance in dataset NA, because the probability to invoke geo-visual search become higher, as the edge number increases. Figure ~\ref{fig:different-dataset}(b) illustrates \textbf{Overlay} gains a constant value since it reports invoke the geo-visual search for every location, while the communication cost of \textbf{Overlay-SI} and \textbf{VIG-SI} increase, as the edge number increases.

\begin{figure*}
\newskip\subfigtoppskip \subfigtopskip = -0.1cm
\begin{minipage}[b]{1\linewidth}
\begin{center}
     \subfigure[Time]{
     \includegraphics[width=0.48\linewidth]{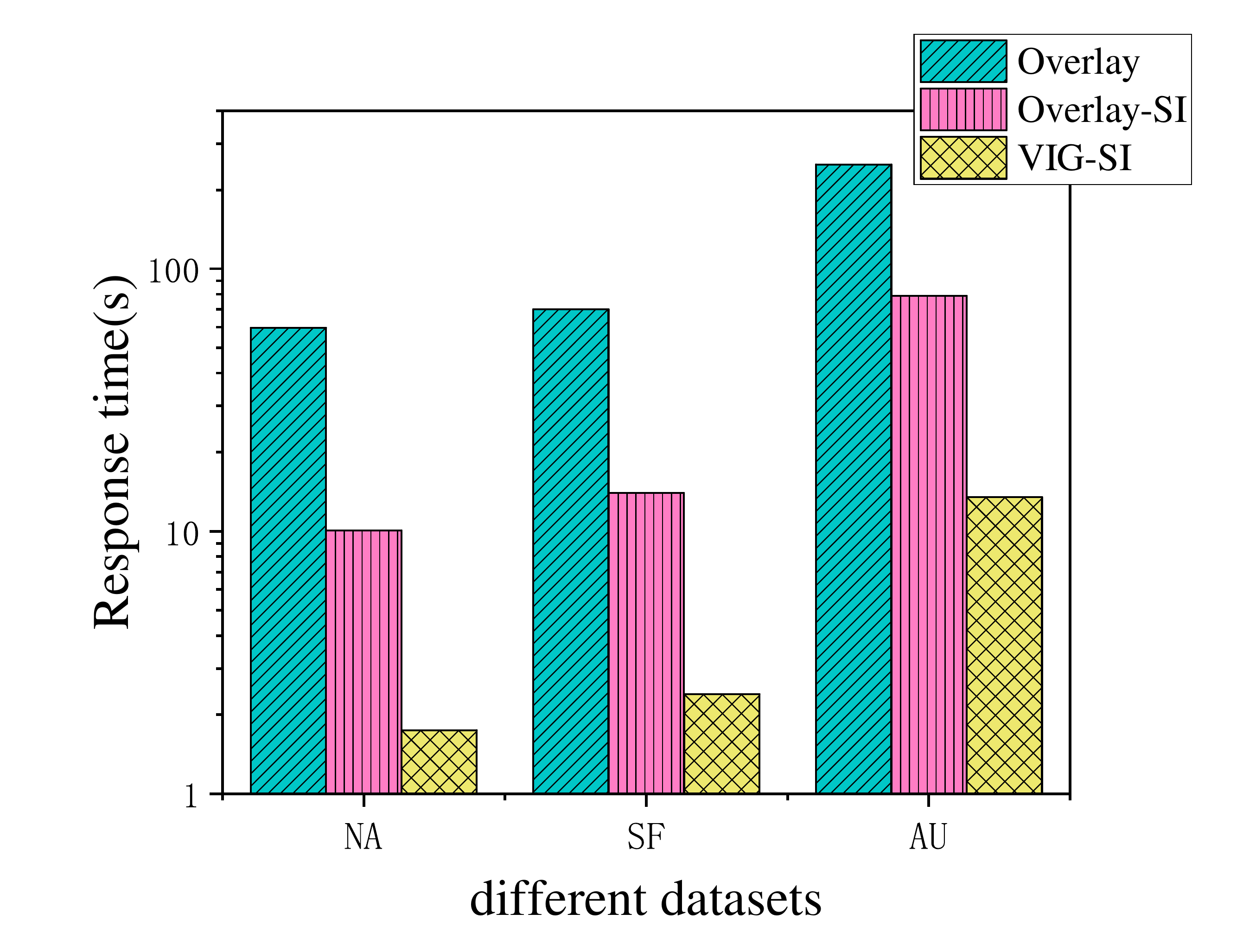}
     }
     \subfigure[Communication Time]{
     \includegraphics[width=0.48\linewidth]{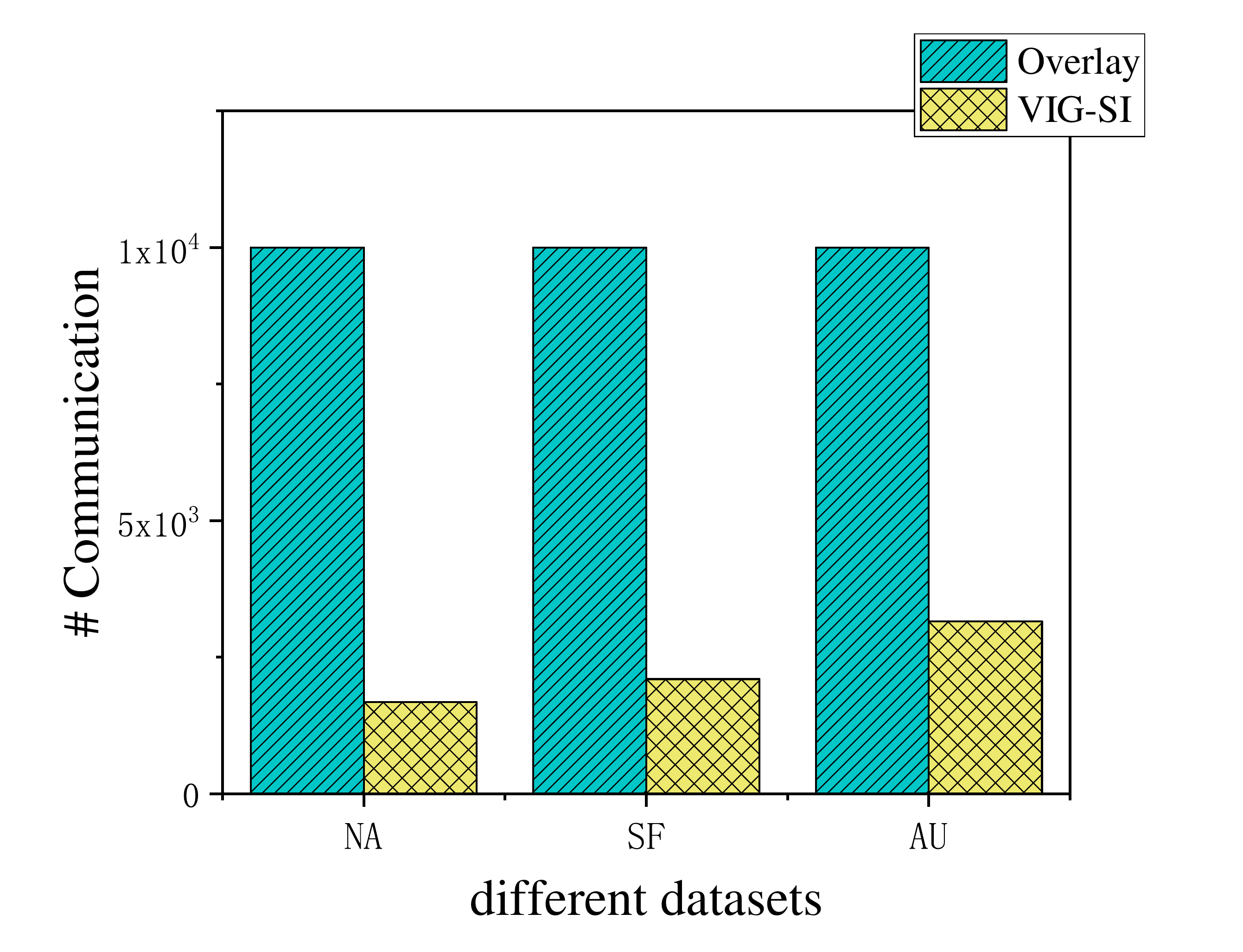}
     }
   \captionsetup{justification=centering}
       \vspace{-0.2cm}
\caption{Effect of different dataset}
\label{fig:different-dataset}
\end{center}
\end{minipage}
\end{figure*} 
\section{Conclusion}
\label{Conclusion}

In this paper, we propose and study a novel query problem named continuous top-$k$ geo-visual objects query (CT$k$GVOQ) on road network. Given a geo-visual objects which contains geographical information and visual content information, a CT$k$GVOQ aims to search out $k$ best geo-visual objects ranked in terms of visual content similarity or relevance to the query and the road network distance proximity to the query location. Firstly we define CT$k$GVOQ formally and propose the score function. In order to improve the efficiency of searching, we present a novel hybrid indexing framework called VIG-Tree and a efficient algorithm named geo-visual search on road network is proposed. To further reducing the computational cost in the process of query moving and searching our top-$k$ results from candidates set faster, we propose the notion of safe interval and introduce an efficient algorithm named moving monitor algorithm. The experimental evaluation on real multimedia dataset and road network dataset shows that our solution outperforms the state-of-the-art method.

\textbf{Acknowledgments:} This work was supported in part by the National Natural Science Foundation of China
(61702560), project (2018JJ3691, 2016JC2011) of Science and Technology Plan of Hunan Province, and the Research and Innovation Project of Central South University Graduate Students(2018zzts177). 



\bibliographystyle{spmpsci}      

\bibliography{ref}

\end{document}